\documentclass[11pt]{article}
\usepackage[utf8]{inputenc}

\usepackage[margin=1in]{geometry}

\usepackage{graphicx}
\usepackage{complexity}
\usepackage{multirow}
\usepackage{amsmath,amssymb,amsfonts}
\usepackage{amsthm}
\usepackage{mathrsfs}
\usepackage{mathtools}
\usepackage{xspace}
\usepackage{booktabs}
\usepackage{tabularx}
\usepackage{multicol}
\usepackage{csquotes}
\usepackage{thm-restate}
\usepackage{cite}
\usepackage{tikz}
\usetikzlibrary{arrows,calc,arrows.meta,decorations.pathreplacing}
\usepackage[colorlinks=true,linkcolor=black, urlcolor=black, citecolor=black]{hyperref}
\usepackage[capitalise]{cleveref}
\hypersetup{
    linkcolor=rwthred,
    citecolor=rwthgreen,
    urlcolor=rwthblue,
    menucolor=rwthred
}
    
\newtheorem{theorem}{Theorem}[section]
\newtheorem{lemma}[theorem]{Lemma}
\newtheorem{proposition}[theorem]{Proposition}
\newtheorem{corollary}[theorem]{Corollary}

\newtheorem{observation}[theorem]{Observation}
\newtheorem{example}[theorem]{Example}

\definecolor{rwthblue}{cmyk}{1, 0.5, 0, 0}
\definecolor{rwthred}{cmyk}{0.15, 1, 1, 0}
\definecolor{rwthbordeaux}{cmyk}{.25, 1, .7, .2}
\definecolor{rwthgreen}{cmyk}{0.70, 0, 1, 0}

\tikzset{bluevertex/.style={circle, fill= rwthblue, inner sep=0pt,minimum size=6pt}}
\tikzset{redvertex/.style={circle, fill= rwthred, inner sep=0pt,minimum size=6pt}}

\newcommand{\intervalnominal}[5]{\draw[Bracket-Bracket,thick,rwthred] (#2, -.6*#1) -- node[above=-2pt,black]{#4} node[below=-1pt,black]{#5} (#3, -.6*#1);}

\newcommand{\intervaliscblue}[5]{\draw[line width = 1mm,rwthblue!20] (#2, -.6*#1) -- node[above=-2pt,black]{#4} node[below=-1pt,black]{#5} (#3,-.6*#1);}

\newcommand{\interval}[6][black]{\draw[Bracket-Bracket,thick,#1] (#3, -#2*.7-1.3) -- node[above=-2pt,black]{#5} node[below=-1pt]{#6} (#4, -#2*.7-1.3);}

\newenvironment{ichart}[3][1]{
	\begin{tikzpicture}[xscale=#1]
		\pgfmathsetmacro{\yb}{-.6*#2-.5}
		\draw[-latex'] ($(-.5, \yb)$) -- ($(#3, \yb) + (.5, 0)$);
		\foreach \t in {0, ..., #3}{
			\draw[-] ($(\t, \yb)$) -- ($(\t, \yb) + (0, -.1)$);
		}
	}{\end{tikzpicture}}

\newcommand{\tikzxmark}{%
	\tikz[scale=0.35] {
		\draw[thick,line width=0.7,line cap=round,rwthred] (0,0) to [bend left=6] (1,1);
		\draw[thick,line width=0.7,line cap=round,rwthred] (0.2,0.95) to [bend right=3] (0.8,0.05);
}}
\tikzset{
	small vertex/.style={draw,circle,minimum size=6pt,inner sep=0pt}}

\newcommand{\f}{f}
\newcommand{\e}{e}
\newcommand{\F}{F}
\newcommand{\Rem}{R}

\newcommand{\backup}{\textbf{b}}

\newcommand{\matching}{{\sc Bipartite Matching}\xspace}

\newcommand{\robustmatching}{{\sc Robust Bipartite Matching}\xspace}
\newcommand{\robustmatchingshort}{{\sc RBM}\xspace}

\newcommand{\tsat}{{\sc3-SAT}\xspace}

\newcommand{\robuststableset}{{\sc Robust Stable Set}\xspace}
\newcommand{\robuststablesetshort}{{\sc RSS}\xspace}

\newcommand{\robuststablesetinterval}{{\sc Robust Interval Stable Set}\xspace}

\newcommand{\stablesetbipartite}{{\sc Bipartite Stable Set}\xspace}

\newcommand{\robuststablesetbipartite}{{\sc Robust Bipartite Stable Set}\xspace}

\newcommand{\weightedind}{{\sc Robust Weighted Stable Set}\xspace}

\newcommand{\generalproblem}{{Recoverable Robust Optimization with Commitment}\xspace}
\newcommand{\repairable}{{repairable}\xspace}

\newcommand{\private}{{private}\xspace}
\newcommand{\universal}{{universal}\xspace}
\newcommand{\extra}{{extra}\xspace}

\newcommand{\boundedregretintervalselection}{{\sc $\lambda$-Robust Interval Scheduling}\xspace}
\newcommand{\boundedregretintervalselectionshort}{{\sc $\lambda$-RIS}\xspace}

\newcommand{\issh}{{\sc Interval Scheduling with Colors}\xspace}
\newcommand{\Issh}{{\sc ISC}\xspace}

\newcommand{\robustmatroidbasis}{{\sc Robust Matroid}\xspace}
\newcommand{\robustmatroidbasisshort}{{\sc RM}\xspace}

\newcommand{\repairablematching}{{\sc Repairable Bipartite Matching}\xspace}
\newcommand{\repairablematchingshort}{{\sc RepBM}\xspace}
\newcommand{\irm}{$I_\text{\repairablematchingshort}$\xspace}

\newcommand{\intervalselection}{{\sc Interval Scheduling}\xspace}
\newcommand{\intervalselectionshort}{{\sc InS}\xspace}
\newcommand{\robustintervalselection}{{\sc Robust Interval Scheduling}\xspace}
\newcommand{\robustintervalselectionshort}{{\sc RIS}\xspace}
\newcommand{\problempi}{\ensuremath{\Pi}\xspace}

\newcommand{\X}[1]{}

\DeclareMathOperator*{\argmax}{arg\,max}

\def\phi{\varphi}


\makeatletter
\newcommand{\problemtitle}[1]{\def\PROBLEMTITLE{#1}}
\newcommand{\probleminput}[1]{\def\PROBLEMINPUT{#1}}
\newcommand{\problemquestion}[1]{\def\PROBLEMQUESTION{#1}}
\newenvironment{problemenv}{
\problemtitle{}\probleminput{}\problemquestion{}
}{
\par\addvspace{.5\baselineskip}
\noindent\begin{tabularx}{\linewidth}{|@{\hspace{5pt}} l@{\hspace{5pt}} X l}
	\multicolumn{2}{@{}l}{\sc\PROBLEMTITLE} \\
	\textbf{Given:} & \PROBLEMINPUT \\
	\textbf{Task:} & \PROBLEMQUESTION
\end{tabularx}
\par\addvspace{.5\baselineskip}
}
\makeatother

\title{Recoverable Robust Optimization with Commitment\footnote{This article has been published in Mathematical Programming~\cite{hommelsheim2023recoverablejournal}. Moreover, this work was conducted when the first and the third authors were affiliated to the University of Bremen and RWTH Aachen University, respectively. }}
\author{
	Felix Hommelsheim\footnote{University of Cologne, Cologne, Germany. \texttt{hommelsheim@cs.uni-koeln.de}}
	\and Nicole Megow\footnote{University of Bremen, Germany. \texttt{nicole.megow@uni-bremen.de}}
	\and Komal Muluk\footnote{TU Dortmund University, Germany. \texttt{komal.muluk@math.tu-dortmund.de}}
	\and Britta Peis\footnote{RWTH Aachen University, Germany. \texttt{peis@oms.rwth-aachen.de}}
}
\date{\today}

\begin{document}
	\maketitle
		
	\begin{abstract}
		We propose a model for recoverable robust optimization with {\em commitment}. 
		Given a combinatorial optimization problem and uncertainty about elements that may fail, we ask for a robust solution that, after the failing elements are revealed, can be augmented in a limited way. 
		However, we commit to preserve the 
		non-failing elements of the initial solution. 
		We settle the computational complexity of such a robust counterpart of various classical polynomial-time solvable combinatorial optimization problems.
		We show, for the weighted matroid independent set problem, that an optimal solution to the nominal problem is also optimal for its robust counterpart.
		Indeed, matroids are provably the only structures with this strong property. 
		Robust counterparts of other problems are \NP-hard such as the matching problem and the stable set problem, even in bipartite graphs. 
		However, we establish polynomial-time algorithms for the robust counterparts of the unweighted stable set problem in bipartite graphs and the weighted stable set problem in interval graphs, also known as the interval scheduling problem.
	\end{abstract}
	
	\section{Introduction}\label{sec:Intro}

Robust optimization has become a central framework in operations research for addressing uncertainty in decision-making and optimization. Unlike traditional optimization, where parameters are assumed to be known precisely, robust optimization considers \emph{scenarios} where some parameters are uncertain but lie within a predefined uncertainty set, which may be specified implicitly. 
Early research in this area focused on computing static solutions that optimize the worst-case cost. For comprehensive overviews, we refer to~\cite{KouvelisYu1997-book,Ben-talGN2009-book-robust,buchheim2018robust}.
However, it is well recognized that this approach may lead to overly conservative solutions as a single, potentially very unlikely, scenario might heavily influence the outcome.

To address this issue of conservatism in robust optimization and introduce more flexibility, 
various frameworks have been proposed, including two-stage adjustable robust optimization by~\cite{ben2004adjustable}, recoverable robust optimization by~\cite{liebchen2009concept}, and demand robust optimization by~\cite{dhamdhere2005pay}. 
These frameworks share a common two-stage structure: In the first stage, a partial solution is determined based on the uncertainty set, while in the second stage, after the uncertain scenario has been revealed, the initial solution can be adjusted within some predefined recovery bounds. This second-stage adjustment is referred to as \emph{recourse}. 

In this work, we use the notion of {\em recoverable robust optimization} as introduced by Liebchen, L{\"u}bbecke, M{\"o}hring and Stiller~\cite{liebchen2009concept}. 
Initially developed in the context of timetabling, linear programming, and railway optimization, it seeks first-stage solutions that can be adapted feasibly across all scenarios within specified recovery limits; the corresponding recovery limits are modeled, for example, via the size of the symmetric difference of solutions.
The cost is typically evaluated as the sum of the first- and second-stage costs, potentially under different cost functions.
This model has been successfully applied to a variety of combinatorial optimization problems such as the shortest path problem \cite{busing2012recoverable,DBLP:journals/corr/abs-2403-20000,JackiewiczKZ2024-networks}, spanning tree and more general matroid problems \cite{hradovich2017recoverable,DBLP:journals/ol/HradovichKZ17,lendl2021matroid}, perfect matchings \cite{dourado_et_al_15}, scheduling problems \cite{DBLP:journals/dam/BoldG22}, 
knapsack problems \cite{DBLP:journals/ejco/BusingGKK19,DBLP:conf/inoc/BusingKK11}, and selection problems \cite{DBLP:journals/dam/LachmannLW21}
which led to insights into the computational complexity, as well as the development of exact and approximate algorithms for solving these problems.

A major drawback of recourse models, including the recoverable robustness framework, is that they do not necessarily preserve the elements of the initial solution during recourse.
However, such preservation is crucial in many practical applications, particularly when commitments or promises have been made.
In any reservation-based system---such as hotel bookings, rental car reservations, or airline seat reservations---the supplier guarantees services to the customers holding reservations. 
If a customer cancels her reservation, the supplier may modify 
the set of reserved customers, but must ensure that the previously reserved customers who did not cancel remain part of the modified solution.
In such scenarios, the supplier can only {\em augment} the existing set of reserved (and not canceled) customers with new customers, without altering the existing commitments.

This motivates us to introduce our new model \emph{\generalproblem}, where we commit to the first-stage decision, 
and restrict the recourse action to augmenting the partial solution which remains after the cancellation or after some other uncertain disruption has taken place. More formally, we define the problem as follows.

\paragraph{\generalproblem.} Consider a combinatorial optimization problem \problempi with ground set $E$, a set $\mathcal{F} \subseteq 2^E$ of feasible solutions, and a weight function $w \colon E \rightarrow \mathbb{R}_+$, where the task is to find a solution $S \in \mathcal{F}$ maximizing $w(S) = \sum_{e \in S} w(e)$.
In the first stage in our model, 
we choose a solution $S \in \mathcal{F}$ to the underlying problem \problempi.
Then, in the second stage, a set $\F \subseteq E$ of \emph{at most} $k$ elements is 
deleted, followed by a step in which a set $\Rem \subseteq E\setminus \F$ of size at most $\ell$ elements is
added to $S \setminus \F$ such that the resulting set is still a solution, i.e., $S' = (S \setminus \F) \cup \Rem \in \mathcal{F}$.
We call $S$ the \emph{first-stage} solution and $S'$ the \emph{second-stage} solution.
The goal is to select a first-stage solution $S$ that maximizes the weight of the second-stage solution, $w(S')$ in the worst case.

We refer to the set $\F$ that is deleted as the interdiction and the set $\Rem$ that is added as \emph{recourse}.
The interdiction can be seen as an \emph{adversary}, who aims to delete a set $\F \subseteq E$ that decreases the final objective value $w(S')$ as much as possible. The adversary can be assumed to have full knowledge 
of the set $\Rem \subseteq E \setminus \F$ that we would optimally add during the recourse phase following the deletion of~$\F$.
Thus, we solve the following problem:
\begin{align}\label{eq: (k,l)recoverablerobustproblem}
	\max_{S\in \mathcal{F}}~~ \min_{\substack{\F \subseteq E \\ |\F| \leq k }}~~ \max_{\substack{\Rem \subseteq E \setminus \F, ~ |\Rem| \leq \ell\\
			(S \setminus \F) \cup \Rem \in \mathcal{F}}}~~  w((S \setminus \F) \cup \Rem)\ . \tag{$k, \ell$-RP}
\end{align}

For the sake of simplicity, we assume that $E \leftarrow E \cup \{\emptyset\}$, i.e., $\{ \emptyset \} \in E$, and we define $w(\{\emptyset\})\coloneqq 0$. 
The inclusion of the empty set in $E$ allows us to conveniently represent scenarios involving empty deletions or additions
of the elements in~$E$.
Further, throughout the paper, we consider downward-closed set systems $\mathcal{F}$, meaning that if $X' \subseteq X$ and $X \in \mathcal{F}$, then $X' \in \mathcal{F}$.
Additionally, for a set $X$ and an element $g$, we use $X+g \coloneqq X\cup\{g\}$ and $X-g \coloneqq X\setminus{\{g\}}$.

A significant part of this paper will deal with the special case of the (\ref{eq: (k,l)recoverablerobustproblem}) 
with $k=1$ and $\ell = 1$. 
Formally we define the restricted version where $k=1$ and $\ell=1$ as follows:
\begin{align}
	\label{eq: (1,1)recoverablerobustproblem}
	\max_{S\in \mathcal{F}}~~ \min_{\f\in E}~~ \max_{\substack{\e\in E-\f \\ S-\f+\e \in \mathcal{F}}}~~ w(S-\f+\e) \ . \tag{RP}
\end{align}

Without loss of generality, we assume that the interdiction is non-empty. 

For a combinatorial optimization problem \problempi, we call the robust variant of~\problempi as in (\ref{eq: (1,1)recoverablerobustproblem}) the \emph{robust counterpart of \problempi}, or {\sc Robust} \problempi in short.
We usually call the problem \problempi, i.e., $\max \{w(f)\mid f\in \mathcal{F}\}$, the \emph{underlying} or \emph{nominal} problem.

As an example, consider a car-rental company with a single car, which we wish to rent out to different customers to maximize the total revenue. 
All customers have their own time intervals in which they want to rent the car exclusively and a price they are willing to pay.
If there is no uncertainty,
this is equivalent to an instance of the \emph{interval scheduling problem}, where we are given a set of weighted intervals on the real line, and the goal is to find a maximum-weighted set of pairwise-disjoint intervals.
However, after the selection of the initial solution, some customers might withdraw from the 
contract, leading to less revenue for our company.
In this case, we could still add unserved customers to our remaining initial solution.
Although the additional customers should not be in conflict with our remaining initial solution, as those reservations were firm.
As a concrete example, consider the five intervals $(1, 3)$, $(2, 5)$, $(4, 7)$, $(6, 9)$, $(8, 10)$ with weights $10, 8, 2, 8, 10$, respectively, each representing a customer.
Picking $(1, 3)$, $(4, 7)$, and $(8, 10)$ is an optimal solution to the underlying problem with value $22$.
An optimal solution to the robust counterpart, however, is to pick only $(1, 3)$ and $(8, 10)$, knowing that if $(1, 3)$ or $(8, 10)$ is interdicted, in the recourse one can add $(2, 5)$ or $(6, 9)$, respectively. 
This leads to a worst-case weight of 18 after the interdiction and recourse.
On the other hand, when considered the optimal solution to the underlying problem, i.e., picking $(1, 3)$, $(4, 7)$, and $(8, 10)$, the deletion of interval $(1, 3)$ would lead to a final value of 12 in the robust version, as the interval $(4, 7)$ blocks the recourse.
This shows that selecting an optimal solution to the nominal problem need not be optimal for its robust counterpart.

\paragraph{The Adversary Problem.} A closely related problem is the \emph{adversary problem} of~\eqref{eq: (k,l)recoverablerobustproblem}. It 
is the subproblem of~\eqref{eq: (k,l)recoverablerobustproblem} in which the set $S \in \mathcal{F}$ is already fixed and the adversary needs to find the optimal interdiction set $\F$.
That is, given some $S \in \mathcal{F}$, the task is to solve the following problem:
\begin{align}
	\label{eq:(k,l)adversaryproblem}
	\min_{\substack{\F \subseteq E \\ |\F| \leq k }}~~ \max_{\substack{\Rem \subseteq E \setminus \F, ~ |\Rem| \leq \ell\\
			(S \setminus \F) \cup \Rem \in \mathcal{F}}}~~  w((S \setminus \F) \cup \Rem)\ . \tag{$k, \ell$-AP}
\end{align}

Note that if $\ell \geq |E|$ for~\eqref{eq: (k,l)recoverablerobustproblem}, it is optimal to choose an empty first-stage solution, as the recourse is essentially unbounded.
Hence, for $\ell \geq |E|$, we can assume $S = \emptyset$. For this case, problem~\eqref{eq:(k,l)adversaryproblem} is often called the \emph{interdiction version} of the problem and is also known as the \emph{most vital edge problem}.
For many underlying problems, its interdiction version is \NP-hard if $k$ is part of the input, including spanning tree interdiction \cite{frederickson1999increasing,zenklusen20151}, matching interdiction \cite{zenklusen2009blockers,zenklusen2010matching}, or shortest path interdiction \cite{bar1998complexity}. 
However, for constant $k$ the adversary problem
is clearly polynomial time solvable by simple enumeration.

\paragraph{Bounded-Regret Problem.}
Another closely related problem, which is both independently interesting and crucial for parts of our results, is the \emph{bounded-regret} version of our problem.  	
In the setting of our model, after the interdiction of an element $f\in E$, one takes the recourse action that may add an element 
$e \in E - f$
to the set $S - f$ which maximizes the total weight of the final solution.
However, the weight of the final solution $w(S-f+e)$ could be less than the weight of the initial solution $w(S)$. 
This weight loss, $w(S)-w(S-f+e)$, can be seen as a \emph{regret}; we denote it by $\Delta(f,S)$ and it is formally defined as:
\[
\Delta(f, S) = w(S) - \max \{w(S-f+e) \mid e\in E-f, S-f+e \in \mathcal{F}\}\ .
\]

The goal in the bounded-regret problem is to find a maximum-weight solution $S$ such that, for each  element $f\in E$, the interdiction of $f$ followed by the construction of a maximum-weight second-stage solution $S-f+e$,
results in a regret bounded by some given value $\lambda \in \mathbb{R}$; formally:
\begin{align}
	\label{eq:opt:sol:bounded-regret}
	\max_{\substack{S\in \mathcal{F}\\ \Delta(f, S) \leq \lambda: \forall f\in E }} w(S) \ . \tag{$\lambda$-RP}
\end{align}

We note that for some values of $\lambda$ the problem~\eqref{eq:opt:sol:bounded-regret} can be infeasible. We denote by $w_{\text{opt}}(\lambda)$ the objective value of \eqref{eq:opt:sol:bounded-regret} for an optimal solution, if one exists.
The bounded-regret version for general values of $k$ and $\ell$ of the problem is defined analogously. 
However, in this work, we focus on the above mentioned case with $k = \ell = 1$.
Now, notice the tight connection between the bounded-regret version \eqref{eq:opt:sol:bounded-regret} and the robust counterpart \eqref{eq: (1,1)recoverablerobustproblem} of a problem.
If we have a polynomial-time algorithm for \eqref{eq:opt:sol:bounded-regret}, and if there are only polynomial-many different candidates for $\lambda$ values, then one can enumerate over all $\lambda$ values and solve \eqref{eq:opt:sol:bounded-regret} for each one of them.  
Finally, to solve \eqref{eq: (1,1)recoverablerobustproblem}, we are interested in the $\lambda$ value which maximizes  $w_{\text{opt}}(\lambda) - \lambda$. 
The idea is that, if we choose a set \( S \subseteq E \) corresponding to the solution \( w_{\text{opt}}(\lambda) \) while selecting the first-stage solution of \eqref{eq: (1,1)recoverablerobustproblem}, the maximum regret, even after adversarial interdiction and recourse, will be at most \( \lambda \).
Thus, the corresponding value $\max_{\lambda} w_{\text{opt}}(\lambda) - \lambda$ gives us an optimal solution of \eqref{eq: (1,1)recoverablerobustproblem}.   
Now, observe that for $k = \ell = 1$ the number of possible $\lambda$ values is at most $O(|E|^2)$ and hence polynomially~bounded.
Note that $\lambda$ can also be negative and hence we also enumerate over these~choices.

\subsection{Our Contribution}\label{sec:contribution}
\begin{table}
	\centering
	\begin{tabular}{l|cc}
		\toprule
		& unweighted & weighted \\
		\midrule
		$(k, \ell)$-\robustmatroidbasis \ for $k \leq \ell$
		& Polynomial & Polynomial  \\ 
		$(1, p)$-\robuststablesetinterval \ for constant $p$ & Polynomial  & Polynomial \\ 
		$(1, 1)$-\robustmatching 
		& \NP-hard  & \NP-hard  \\
		$(1, 1)$-\robuststablesetbipartite 
		& Polynomial  &  \NP-hard \\
		\bottomrule
	\end{tabular}
	\caption{Our results. Note that hardness results for the $(1, 1)$-case also imply hardness for the general $(k, \ell)$-case.}
	\label{tab:results}
\end{table}
We present algorithms and complexity results for a variety of combinatorial optimization problems in the model of \generalproblem. 
In this paper, we focus on some underlying problems that are classical downward-closed combinatorial optimization problems which can be solved in polynomial time.
As one would expect, robust counterparts of some of these
problems turn out to be \NP-hard, even in the setting $k = \ell = 1$.
However, in some cases the robust counterpart indeed remains polynomial-time solvable.
The most prominent problem in this category is the \robustmatroidbasis\ problem. 
We show that \robustmatroidbasis is polynomial time solvable if $k \leq \ell $.
In fact, we show that it suffices to solve the nominal problem for obtaining an optimal first-stage solution for the robust counterpart.
Surprisingly, this does not hold as soon as $k>\ell$.
For other problems, solving the robust counterpart is much more involved. 
Table~\ref{tab:results} summarizes our results. 

A major challenge in designing algorithms for the robust problem lies in the fact that a first-stage solution $S$ must be robust against any interdiction with possible recourse taken into account, even if $k=\ell=1$. 
Intuitively, we may pre-compute for each element $f\in E$, a \enquote{backup element} for the scenario in which  the element $f$ fails.
We then solve \enquote{the problem with backups}. 
However, this increases the problem complexity substantially as such backups must not be selected for the solution $S$ and the uncertainty about the failing element (and its activated backup) causes complex dependencies.

Next, we discuss our main results for the variety of problems; definitions follow later.

\paragraph{Robust Matroid (RM).}
We study the problem of finding a maximum-weight independent set of a matroid in the robust setting.
We refer to this problem as \robustmatroidbasis(\robustmatroidbasisshort for short).
We show that an optimal solution to the nominal problem, i.e., the problem of finding a maximum-weight independent set in a matroid, is also an optimal first-stage solution for its robust counterpart, whenever $k \leq \ell$.

Hence, the classical greedy algorithm for the nominal problem implies a polynomial-time algorithm for \robustmatroidbasisshort.
This is a bit surprising since the adversary problem 
of a special case of the nominal problem, spanning tree interdiction, is \NP-hard in general.
However, this is not a contradiction since in our algorithm and proof we do not need to actually compute the adversary's response and can simply prove that computing an optimal solution to the nominal problem is also optimal for~\robustmatroidbasisshort.

Further, we give two cases in which this property is not true:
First, we show that for $k > \ell$, the optimum solution to the nominal problem need not be an optimum solution to \robustmatroidbasisshort, even if $k=2$ and $\ell =1$.
Second, we also show that for any non-matroid this property is not true, even if $k=\ell=1$. 
Therefore, for $k \leq \ell$, matroids are exactly those structures for which the \emph{price of robustness}, i.e., the worst-case ratio between the weight of an optimal solution to the underlying problem and the weight of an optimal first-stage solution to its robust counterpart, is guaranteed to be one.

\paragraph{Robust Stable Set in Interval Graphs (RIS).}
We give a polynomial-time algorithm for $(1, 1)$-\robuststablesetinterval, the stable set problem in interval graphs. 
For convenience, we view this problem as selecting pairwise-disjoint intervals from a given set of intervals and call it \robustintervalselection (\robustintervalselectionshort).
Our key ingredient is a reduction of \robustintervalselectionshort to its bounded-regret version.
In general, it is not straightforward to construct a polynomial-time algorithm for the bounded-regret problem, even in the case of the matroid independent set 
problem for which we know that the robust counterpart can be solved efficiently.
As a crucial step, we show that for a given $\lambda$, the bounded-regret \robustintervalselection problem can be reduced to a more general interval scheduling 
problem, which we call \issh (\Issh) and for which we give an efficient dynamic program. 
Furthermore, we show how this result can be generalized to $(1, p)$-\robuststablesetinterval for any constant $p \in \mathbb{Z}_{\geq 1}$.

\paragraph{Robust Bipartite Matching (RBM).} 
Even for the unweighted case, we show that \robustmatchingshort is \NP-hard; we give a reduction from \tsat.
Here, the underlying problem \matching can be viewed as a special case of a matroid intersection problem.
Hence, our complexity result for \robustmatchingshort reveals a drastic increase in problem difficulty when the underlying problem involves finding a maximum-weight independent set 
of the {\em intersection of two matroids} instead of a {\em single matroid}.
This is somewhat surprising, as for a single matroid, even an optimal solution of the nominal problem is optimal for its robust counterpart.

\paragraph{Robust Bipartite Stable Set (RBSS).}
A maximum-weight stable set in a graph is the complement of a minimum-weight vertex-cover, which in bipartite graphs can be found (in polynomial-time) via a primal-dual algorithm which simultaneously computes both, a maximum-weight matching, and a minimum-weight vertex cover.
Hence, the complexity of \stablesetbipartite and \matching is the same and even the algorithms are very similar (which inspired us to look into these problems more closely).
However, the complexity of the robust counterpart of bipartite stable set, RBSS, depends highly on the weights.
We show that, for unit weights, RBSS is polynomial-time solvable (even in more general \emph{König-Egerváry} graphs).
We prove this by reducing the problem to finding \repairable maximum stable sets, which are maximum stable sets $S$ satisfying that for each $v \in S$ there is some $u \in V \setminus S$ such that $S - v + u$ is a stable set.
We then show that \repairable maximum stable sets can be found in polynomial time (if they exist).
Further, we show that RBSS is \NP-hard in the weighted setting.

\subsection{Further Related Work}\label{sec:related-work} 

Recoverable robust optimization is a powerful framework for decision-making under uncertainty that combines robustness with limited adaptability. 
Due to its generality, many (robust) combinatorial optimization problems can be naturally interpreted as recoverable robust problems.
Most research in this area has focused on settings where the second-stage {\em costs} are uncertain, and  recourse actions are constrained by a parameter that restricts the {\em symmetric difference} between the first-stage and second-stage solution. 
When facing discrete uncertainty---where a finite list of potential cost functions is explicitly given---many recoverable robust problems 
turn out to be \NP-hard, even for problems that are polynomial-time solvable in the full-information setting, such as the shortest path problem \cite{busing2012recoverable}, 
basic scheduling and selection problems \cite{KouvelisYu1997-book,DBLP:journals/ol/BusingKK11,DBLP:journals/eor/GoerigkLW22a}.
To address this issue, approximation algorithms have been proposed; see, e.g.,~\cite{DBLP:journals/ol/ChasseinG16,DBLP:journals/eor/GoerigkLW22a}. 

In contrast, under interval uncertainty, these problems often 
reduce  to finding two solutions to the nominal problem minimizing the weight for two different cost functions, while ensuring that the size of the symmetric difference is bounded.
This reduction has been applied to problems such as the selection problem \cite{DBLP:journals/dam/LachmannLW21}, spanning trees \cite{hradovich2017recoverable}, general matroids \cite{lendl2021matroid}, the traveling salesperson problem \cite{goerigk2021recoverable}, and the assignment problem \cite{DBLP:conf/iwpec/0001HLW21}. 

While most prior work focuses on cost uncertainty, some recent works have begun assuming {\em structural} uncertainty \cite{dourado_et_al_15,ito2022parameterized}, as we do in this paper. 
For instance, \cite{dourado_et_al_15} studied the following related model:  
given a bipartite graph and a set of uncertain edges $F$, the task is to find a perfect matching $M$ such that for any subset $F' \subseteq F$ with $|F'|\leq r$, the graph $G-F'$ contains a perfect matching $M'$ where the symmetric difference of $M$ and $M'$ is bounded by some parameter.
\cite{dourado_et_al_15} establish hardness results for this problem and give characterizations for graphs that admit such matchings $M$.

Similar two-stage concepts that introduce a degree of adaptivity in robust optimization have been explored
under different names. One example is {\em adjustable robustness}, studied by 
\cite{ben2004adjustable} in the context of robust linear programming and applications to multi-stage inventory management problem. Here, certain solution variables have to be fixed in the first stage, before the uncertain data is revealed (``here and now'' decisions), whereas other variables can be adjusted in a second stage, after the uncertain data becomes known (``wait and see'' decisions). 

In contrast, in {\em demand robustness} or {\em two-stage adjustable robustness}, all solution variables are free to be adjusted in the second stage. However, the extent of these adjustments is regulated via a cost function, where the cost for setting a variable in the second stage (typically selecting it to be part of the solution) is substantially higher than its first-stage cost. 
This model has been studied extensively from both complexity and approximation perspectives for problems such as certain linear programs \cite{DBLP:journals/mor/BertsimasG10,DBLP:journals/ior/ZhenHS18,DBLP:journals/mp/HousniFG24}, a variety of discrete covering problems including 
shortest paths, Steiner trees, 
set cover, and facility location \cite{dhamdhere2005pay,feige2007robust,khandekar2013two,gupta2014thresholded,DBLP:conf/ipco/HousniGS21}, matching \cite{DBLP:conf/approx/HousniGH021}, and scheduling problems \cite{DBLP:conf/approx/ChenMRS15}.

In all these different models, a commitment to the first-stage solution is not explicitly required. Instead, recourse is constrained by restricting the symmetric difference between first-stage and second-stage cost and/or by applying different cost functions for the two stages. 
In this paper, we {\em explicitly require commitment} to the first-stage solution while allowing limited adaptation in the second stage. 
This adaptation is achieved by {\em adding} (selecting) a restricted  number of additional solution variables in the second stage. 
Notably, similar fine-grained restrictions on recourse budgets have been explored in some problem-specific contexts. For example, B{\"u}sing, Koster, and Kutschka explicitly limit the number of items that can be added or removed from the first-stage solution for the knapsack problem, in an experimental context \cite{DBLP:journals/ol/BusingKK11}. 
Further, \cite{DBLP:journals/dam/BoldG22} require that a certain number of jobs must have the same scheduling position in the first and second stage in a single-machine scheduling problem. With our model, we aim to provide a general framework for recoverable robustness with commitment, unifying and extending such problem-specific approaches.

In this paper, we focus on downward-closed problems, where a solution remains feasible even if arbitrary elements are removed. Hence, the structural uncertainty impacts only the solution quality (cost) and not its feasibility. This proves to be both interesting and challenging within our model.

A somewhat opposing setting is captured by {\em bulk-robustness} and {\em flexible network design}. In this framework, 
we select a superset of a solution that remains feasible even if a subset of elements from a given list is deleted.
Hence, considering upward-closed problems, the goal is to find a minimum-cost solution that remains feasible in any scenario.
Bulk-robustness has been studied in various contexts such as connectivity and matchings \cite{adjiashvili2015bulk,adjiashvili2022flexible,hommelsheim2021secure,DBLP:conf/icalp/ChekuriJ23,DBLP:journals/mp/BoydCHI24}. 
The key difference between this model and ours is that, in this model, the structure of the solution can differ significantly from that of the nominal problem, whereas in our setting, the structure of a feasible solution remains~unchanged.	
	\section{Preliminaries}\label{sec:preliminaries}
We consider simple undirected graphs consisting of a set of nodes denoted by $V(G)$ and a set of edges denoted by $E(G)$.
Occasionally, we omit the notation $G$ and instead only write $V$ or $E$.
A \emph{stable set}, widely also known as an \emph{independent set} in a graph $G$ is a subset of vertices $V'\subseteq V$ such that the induced subgraph $G[V']$ does not contain any edge.
A \emph{matching} in a graph consists of a set of pairwise-disjoint edges.
A graph $G$ is {\em bipartite} if the vertex set can be partitioned into two disjoint sets $V = A\ \dot\cup\ B$ such that each edge $e \in E$ has one endpoint in $A$ and the other in $B$.
An {\em interval graph} is a graph that has an interval representation: each vertex $v \in V$ can be associated with an interval $i_v$ on the real line and two vertices share an edge if and only if their corresponding intervals intersect.
The \emph{line graph} of a graph $G$ contains the vertex set corresponding to $E(G)$, and there is an edge between two vertices if and only if the corresponding edges share an endpoint in~$G$.

A non-empty downward-closed set system $(E, \mathcal{I})$ with $\mathcal{I}\subseteq 2^E$ is a \emph{matroid} if  it satisfies the \emph{exchange property}:
\begin{equation}\label{ex1} 
	\mbox{if } I,J \in \mathcal{I} \mbox{ and } |I|<|J|, \mbox{ then } I+g \in  \mathcal{I} \mbox { for some } g\in J\setminus{I}.
\end{equation}

Given a matroid $\mathcal{M}=(E,\mathcal{I})$, a subset $I\subseteq E$ is called \emph{independent} if $I\in \mathcal{I}$, and \emph{dependent} otherwise. 
For $X\subseteq E$, an inclusion-wise maximal independent subset of $X$ is called a \emph{basis} of~$X$. 
Let us denote by $\mathcal{B}$
the collection of all bases of $E$ which are also the bases of the matroid $\mathcal{M}$.
It follows from (\ref{ex1}) that all bases of $\mathcal{M}$ have the same cardinality.  
For an element $\f \in E$, we define $\mathcal{M}-\f \coloneqq (E - \f, \mathcal{I}')$, where $\mathcal{I}' \coloneqq \{ I \in \mathcal{I} \mid \f \notin I \}$.

It is a well-known fact \cite[Thm.~2]{brualdi1969comments} that a non-empty collection $\mathcal{B}\subseteq 2^E$ of sets forms the base set of a matroid if and only if the following \emph{strong basis exchange property} holds:
\begin{align*} 
	& \mbox{if } B_1, B_2 \in \mathcal{B},\ a \in B_1 \setminus{B_2}, \mbox{ then  } \exists  b \in B_2 \setminus{B_1} \tag{2} \\
	& \mbox{with } B_1-a+b \in \mathcal{B} \mbox{ and } B_2-b+a \in \mathcal{B}.
\end{align*}

	\section{Robust Matroid}\label{sec:robust-matroids}

In this section, we consider the problem $(k, \ell)$-\robustmatroidbasis\ ($(k,\ell)$-\robustmatroidbasisshort) for $k \leq \ell$, where we are given a matroid  $\mathcal{M}=(E, \mathcal{I})$ with weight function $w \colon E \rightarrow \mathbb{R}$, and the task is to solve \eqref{eq: (k,l)recoverablerobustproblem} for $k \leq \ell$ with the feasibility set $\mathcal{F}$ equal to the independent set system~$\mathcal{I}$.
That is, in this section, we consider the following problem:
\begin{problemenv}
	\problemtitle{$(k, \ell)$-\robustmatroidbasis ($(k, \ell)$-\robustmatroidbasisshort)}
	\probleminput{A matroid  $\mathcal{M}=(E, \mathcal{I})$ with weight function $w \colon E \rightarrow \mathbb{R}_+$.}
	\problemquestion{Find $I\in \mathcal{I}$ such that for all $\F\subseteq E, |\F|\leq k$, there exists 
		$\Rem \subseteq E\setminus \F, |\Rem|\leq \ell$ such that $(I\setminus \F)\cup \Rem\in \mathcal{I}$, and the weight $w((I\setminus \F)\cup \Rem)$ is maximized.}
\end{problemenv}

A \emph{circuit} of a matroid is an inclusion-wise minimal dependent set. 
Recall that for any basis $B\in \mathcal{B}$ of a matroid $\mathcal{M}$, and any $g\in E\setminus{B}$, there exists a unique circuit, called $C(B,g)$, among the elements in  $B\cup \{g\}$. 
Recall further that the collection of circuits $\mathcal{C}$ of a matroid satisfies the following  \emph{cycle exchange property}: 
for any two  circuits $C_1, C_2\in \mathcal{C}$, and any $g\in C_1\cap C_2$, there exists  a circuit $C_3\in \mathcal{C}$ with $C_3\subseteq (C_1\cup C_2) \setminus{\{g\}}$.

Furthermore, matroids are closed under  deletion and contraction operations. That is, for any matroid $\mathcal{M} = (E, \mathcal{I})$ and any $\F \subseteq E$, the system $\mathcal{M} \setminus \F = (E \setminus \F, \mathcal{I} \setminus \F)$ with independence system $\mathcal{I} \setminus \F \coloneqq \{I \subseteq E \setminus \F \mid I \in \mathcal{I}\}$, as well as the system $\mathcal{M} /F = (E \setminus \F, \mathcal{I} / F)$ with independence system $\mathcal{I} / F \coloneqq \{I \subseteq E \setminus \F \mid F\cup I \in \mathcal{I}\}$,  are again matroids.
For more details see, e.g., \cite[Chapter~39]{schrijver2003combinatorial}.

Let $\mathcal{B}$ be the collection of bases of the matroid $\mathcal{M}=(E,\mathcal{I})$. We will prove that for any  weight function $w \colon E \rightarrow \mathbb{R}$, any maximum-weight basis in $\mathcal{B}$ is also an optimal solution to $(k,\ell)$-\robustmatroidbasisshort\ if $k \leq \ell$.
We use the following lemma.

\begin{restatable}[]{lemma}{matroidhelper}
	\label{lem:matroid-helper}
	Let $B$ be a maximum-weight basis of $\mathcal{M}$ and let $\f \in B$.
	Further, let $\e \in \argmax \{ w(g) \mid g\in E\setminus{\{\f\}}, \ B - \f + g \in \mathcal{B} \}$.
	Then $B - \f + \e$
	is a maximum-weight basis of~$\mathcal{M} - \f$.
\end{restatable}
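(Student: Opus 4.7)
Set $B' := B - f + e$. Note that $B'$ is a basis of $\mathcal{M}$ by definition of $e$, and since $f \notin B'$, it is also independent in $\mathcal{M} - f$. The goal is to show that among all bases of $\mathcal{M} - f$, none has larger weight than $B'$. The plan is to pick an arbitrary maximum-weight basis $B^\ast$ of $\mathcal{M} - f$ and use the strong basis exchange property (\ref{eq2}) to find a suitable \emph{simultaneous} swap between $B$ and $B^\ast$ at $f$, then combine the optimality of $B$ with the optimality of $e$.

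\textbf{Key steps.} First, I would establish that $B^\ast$ is actually a basis of $\mathcal{M}$ (and not just of $\mathcal{M}-f$), so that the strong basis exchange property applies. This uses the existence of $e$: since $B - f + e \in \mathcal{B}$, bases of $\mathcal{M}$ and $\mathcal{M}-f$ have the same cardinality (i.e.\ $f$ is not a coloop), and $B^\ast$ is independent in $\mathcal{M}$ of that cardinality. Second, since $f \in B \setminus B^\ast$, property (\ref{eq2}) yields an element $g \in B^\ast \setminus B$ such that both
\[
B - f + g \in \mathcal{B} \qquad \text{and} \qquad B^\ast - g + f \in \mathcal{B}.
\]
Third, I would invoke two optimality inequalities. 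From $B - f + g \in \mathcal{B}$ and the choice of $e$, we get $w(e) \ge w(g)$. From $B^\ast - g + f \in \mathcal{B}$ and the fact that $B$ is a maximum-weight basis of $\mathcal{M}$, we get $w(B) \ge w(B^\ast) - w(g) + w(f)$. Rearranging and chaining these yields
\[
w(B^\ast) \;\le\; w(B) + w(g) - w(f) \;\le\; w(B) + w(e) - w(f) \;=\; w(B - f + e) \;=\; w(B').
\]
Since $B^\ast$ was an arbitrary maximum-weight basis of $\mathcal{M}-f$, this shows $B'$ is also one.

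\textbf{Main obstacle.} The only subtle point is ruling out the \enquote{coloop} case: if every basis of $\mathcal{M}$ contained $f$, then $B^\ast$ would have strictly smaller cardinality than $B$, the strong basis exchange (\ref{eq2}) would not directly apply between $B$ and $B^\ast$, and in particular there might be no $g \in B^\ast \setminus B$ satisfying $B - f + g \in \mathcal{B}$. The hypothesis that the argmax set $\{g \in E \setminus \{f\} : B - f + g \in \mathcal{B}\}$ is non-empty (so that $e$ exists) precisely excludes this, and the rest of the argument is a clean two-inequality chase. Everything else is routine matroid bookkeeping.
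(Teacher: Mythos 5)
Your proof is correct, but it takes a genuinely different route from the paper's. The paper argues by contradiction: it takes a maximum-weight basis $B^\ast$ of $\mathcal{M}-f$ chosen to maximize $|B^\ast\cap B'|$, performs a strong basis exchange between $B'$ and $B^\ast$ at some $a\in B^\ast\setminus B'$, and then runs a fundamental-circuit analysis (comparing $C(B,a)$, $C(B',a)$ and $C(B,e)$ via the circuit exchange axiom) to exhibit a basis of $\mathcal{M}$ heavier than $B$. You instead pivot once at $f$ itself: after observing that $B^\ast$ is a full-rank independent set and hence a basis of $\mathcal{M}$, the strong exchange property applied to $f\in B\setminus B^\ast$ gives a $g$ with $B-f+g\in\mathcal{B}$ and $B^\ast-g+f\in\mathcal{B}$, and the two inequalities $w(e)\ge w(g)$ (optimality of $e$) and $w(B)\ge w(B^\ast)-w(g)+w(f)$ (optimality of $B$) chain directly to $w(B^\ast)\le w(B')$. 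Your argument is more elementary --- it needs neither fundamental circuits, nor the circuit exchange axiom, nor the ``closest basis'' trick --- and it is direct rather than by contradiction, so it is arguably the cleaner proof of this lemma; the paper's machinery does not appear to buy anything extra here. Your handling of the coloop issue is also right: the existence of $e$ forces the bases of $\mathcal{M}-f$ to have the same cardinality as those of $\mathcal{M}$, which is exactly what legitimizes applying the strong exchange property between $B$ and $B^\ast$. One cosmetic remark: you could note explicitly that $B'$ is not only independent in $\mathcal{M}-f$ but a basis of it (any independent superset in $\mathcal{M}-f$ would be independent in $\mathcal{M}$, contradicting maximality of $B'$), so that the weight comparison indeed establishes the claim.
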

\begin{proof}
	Let  $B' = B - \f + \e$,
	and let $B^\star$ be a maximum weight independent set in $\mathcal{M}-\f$ with $|B^\star \cap B'|$ maximum.
	If $B' \neq B^\star$, take any $a \in B^\star \setminus B'$.
	By the strong basis exchange property there exists $b \in B' \setminus B^\star$ with $B^\star - a + b \in \mathcal{B}$ and $B' -b + a \in \mathcal{B}$.
	Therefore we have that (i) $w(a) > w(b)$, since $B^\star$ is a maximum weight basis of $\mathcal{M}-\f$,
	and by our choice of $B^\star$ being closest to $B'$.
	Hence, $B - b + a$ cannot be a basis of $\mathcal{M}$, since $w(B - b + a) > w(B)$, and $B$ was a maximum weight basis of $\mathcal{M}$.
	Therefore we have that $b \notin C(B, a)$, but $b \in C(B', a)$,  where $B'= B -\f+\e$.
	This implies that $\f \in C(B, a)$, and thus (ii) $w(a) \leq w(\e)$ by the choice of~$\e$ as an element of maximum weight among those that can be feasibly exchanged with $f$.
	But then (i) and (ii) together imply that $w(b) < w(\e)$. Since $\f\in C(B,\e)\cap C(B,a)$, by the circuit exchange property, there exists a circuit  $\bar{C}\subseteq \left( C(B,\e)\cup C(B,a) \right) \setminus\{\f\}\subseteq B-\f+\e+a=B'+a$. Thus, this circuit $\bar{C}$ must be equal to the unique circuit $C(B',a)$ which contains $b$. 
	It follows that $b\in C(B',a)=\bar{C} \subseteq \left( C(B,\e)\cup C(B,a) \right) \setminus\{\f\}$ with $b\not\in C(B,a)$, implying that $b\in C(B,\e)$. Summarizing, $B-b+\e$ is a basis of larger weight in matroid $\mathcal{M}$ than $B$, a contradiction.
\end{proof}

\begin{restatable}[]{theorem}{matroidbasisopt}
	\label{thm:matroid:basis-opt}
	Any maximum-weight basis of $\mathcal{M}$ is an optimal first-stage solution for $(k,\ell)$-\robustmatroidbasis\ if $k \leq \ell$. Thus, $(k,\ell)$-\robustmatroidbasis is polynomial-time~solvable if $k \leq \ell$.
\end{restatable}    
\begin{proof}
	Let $S$ be a maximum weight basis of $\mathcal{M}$ and let $S^\star$ be an optimal first-stage solution to \robustmatroidbasisshort.
	Furthermore, let $\F$ be the set of elements that are deleted by the adversary in the worst-case for the first-stage solution $S$ and let $S'$ be its second-stage solution after the deletion of $\F$ and the recourse. 
	Thus, $S'$ is a maximum-weight basis in matroid $\mathcal{M} \setminus{F}/(S\setminus{F})$.
	Additionally, let $S^\star_\F$ be the second-stage solution for the first-stage solution $S^\star$ with interdiction set $\F$.
	We will show that $w(S') \geq w(S^\star_\F)$. 
	Since $\F$ was the worst-case choice of the adversary for $S$ and $\F$ could also be chosen as an interdiction set for $S^\star$, we have that then $S$ is also an optimal first-stage solution.
	Hence, it remains to prove $w(S') \geq w(S^\star_\F)$. 
	
	Consider the adversary and the recourse as a one-by-one process, in which the elements in~$\F$ are deleted one after the other in a fixed order, and after each deletion of an element (the adversary's action) another element is added to the solution (the recourse), until all elements of~$\F$ have been deleted and each recourse has been added.
	That is, we have a fixed order of the elements in~$\F$, i.e., $\f_1, \f_2, \dots, \f_k$, $\f_i \in \F$ for $1 \leq i \leq k$ and in step $i$ the element $\f_i$ is deleted and an element $e_i \in E \setminus (S \cup \{\f_1, \f_2, \dotsc, \f_i\})$ 
	is added to $S$ such that $(S \cup \{\e_1, \e_2, \dots, \e_i\}) \setminus \bigcup_{j = 1}^{i} \{\f_j\}$ is an independent set (or an element $\e_i^\star \in E \setminus (S^\star \cup \{\f_1, \f_2, \dots, \f_i\})$ for $S^\star$, respectively).
	Hence, we consider this process for $S$ and $S^\star$.
	Note that we can do this process if $k \leq \ell$.
	
	In this process, one can potentially add elements from $(E \setminus S) \cap \F$ that are then removed again in a later step (and the same for $S^\star$).
	Furthermore, in this process, as recourse for the elements in the first-stage solution $S$ we always add the elements from the remaining set of elements $E \setminus (S \cup \{\f_1, \f_2, \dots, \f_{i-1}\})$ that maximizes the weight of the resulting basis of $\mathcal{M} \setminus \bigcup_{j = 1}^{i-1} \{\f_j\}$.
	That is, $\e_i \in \argmax \{ w(g) \mid \big( (S  + g - \f_i \cup \{\e_1, \e_2, \dots, \e_{i-1}\}) \setminus \bigcup_{j = 1}^{i} \{\f_j\} \big) \in \mathcal{I} \}$. 
	Note that this is always possible due to the strong basis exchange property  (taking
	$B_1:= S\cup\{e_1, \ldots, e_{i-1}\} \setminus{\{f_1, \ldots, f_{i-1}\}}$, $ a:=f_j$, and $B_2:=S'$).
	
	For the optimal first-stage solution $S^\star$, if some element is deleted during the process, we simply add some element from the final solution $S^\star_\F$.
	Let $S_i$ and $S_i^\star$, $0 \leq i \leq k$, be the intermediate bases obtained by this process for $S$ and $S^\star$, respectively.
	Again, note that we can always add an element for each removed element since $k \leq \ell$.
	
	Now we invoke Lemma~\ref{lem:matroid-helper} and have that $S_1$ is a maximum weight basis of $\mathcal{M}-\f_1$. 
	Furthermore, iteratively we can now invoke Lemma~\ref{lem:matroid-helper} to show that $S_i$ is a maximum weight basis of $\mathcal{M} \setminus \bigcup_{j = 1}^{i} \{\f_j\}$.
	Hence,  we have that $w(S_i) \geq w(S^\star_i)$ for all $1 \leq i \leq k$.
	In particular,  since $S_k$ is an optimal basis of $\mathcal{M}\setminus{F}$, and since $S_k$ contains all elements from $S\setminus{F}$,  it follows that $w(S')=w(S_k)$, implying $w(S') = w(S_k) \geq w(S_k^\star) = w(S^\star_\F)$.
	Consequently, by using the polynomial-time greedy algorithm for the matroid basis problem \cite{kruskal1956shortest}, we can compute an optimal first-stage solution for $(k,\ell)$-\robustmatroidbasis in polynomial time if $k \leq \ell$.
\end{proof}

We observe that this result does not hold if $k > \ell$, which is highlighted in the following example.
It remains open whether this problem admits a polynomial-time algorithm for $k > \ell$.

\begin{figure}[t]
	\begin{center}
		\begin{tikzpicture}[scale=1]
			\begin{scope}
				\node[small vertex, label={left:$c$}] (1) at (0,0) {};
				\node[small vertex, label={left:$a$}] (2) at (0,2) {};
				\node[small vertex,label={right:$b$}] (3) at (2,2) {};
				\node[small vertex, label={right:$d$}] (4) at (2,0) {};
				\draw (1)-- node[left]{$5$} (2);
				\draw (2)-- node[above]{$3$}(3);
				\draw (3)-- node[right]{$3$}(4);
				\draw (1)-- node[below]{$8$}(4);
				\draw (1)-- node[below right]{$4$}(3);
			\end{scope}
			\begin{scope}[xshift=4cm]
				\node[small vertex, label={left:$c$}] (1) at (0,0) {};
				\node[small vertex, label={left:$a$}] (2) at (0,2) {};
				\node[small vertex,label={right:$b$}] (3) at (2,2) {};
				\node[small vertex, label={right:$d$}] (4) at (2,0) {};
				\draw[line width=2pt] (1)-- node[left]{$5$} (2);
				\draw (2)-- node[above]{$3$}(3);
				\draw (3)-- node[right]{$3$}(4);
				\draw[line width=2pt] (1)-- node[below]{$8$}(4);
				\draw[line width=2pt] (1)-- node[below right]{$4$}(3);
			\end{scope}
			\begin{scope}[xshift=8cm]
				\node[small vertex, label={left:$c$}] (1) at (0,0) {};
				\node[small vertex, label={left:$a$}] (2) at (0,2) {};
				\node[small vertex,label={right:$b$}] (3) at (2,2) {};
				\node[small vertex, label={right:$d$}] (4) at (2,0) {};
				\draw[line width=2pt] (1)-- node[left]{$5$} (2);
				\draw[line width=2pt] (2)-- node[above]{$3$}(3);
				\draw[line width=2pt] (3)-- node[right]{$3$}(4);
				\draw (1)-- node[below]{$8$}(4);
				\draw (1)-- node[below right]{$4$}(3);
			\end{scope}
		\end{tikzpicture}
	\end{center}
	\caption{Counterexample showing that an optimal nominal solution (middle) need not be optimal for the robust counterpart of the $(2,1)$-\robustmatroidbasis problem.
	}
	\label{fig:rroc:matroid:counterexample}
\end{figure}

\begin{example}\label{example:rroc:matroid:counterexample}
	Consider the instance of $(2,1)$-\robustmatroidbasis as illustrated in Figure~\ref{fig:rroc:matroid:counterexample}, in the left picture. The maximum-weight independent sets, i.e., the bases of the (graphic) matroid 
	are the spanning trees of the depicted graph. 
	It is not hard to see that the unique maximum-weight basis, which is the spanning tree shown in the middle with a total weight $17$, is not an optimum solution for the robust counterpart: If we start with the spanning tree of value $17$ as 
	first-stage solution, the guaranteed surviving value is only $7$, since the adversary might delete the two edges of weight $5$ and $8$, and we can only augment by adding one of the two edges of value $3$. 
	Instead, if we start with the spanning tree consisting of the edges of values $3, 5$ and $3$ as shown in the right picture, the adversary will delete edges with value $8$ and $3$ (vertical). Without adding any other edge in the recourse step, the total weight is $8$.
\end{example}

Further, we show that matroids are the only downward-closed set systems for which an optimal solution to the nominal problem is also an optimal solution to its robust counterpart.

\begin{theorem}
	\label{thm:non-matroids}
	For every non-empty downward-closed set system $\mathcal{F} \subseteq 2^E$ which is not a matroid, there exist weights $w \colon E \to \mathbb{R}_+$  such that an optimal solution to the nominal problem is not an optimal first-stage solution for the robust~counterpart.
\end{theorem}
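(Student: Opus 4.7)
The plan is to leverage the failure of the matroid exchange property to construct a weight function for which the nominal optimum has strictly smaller robust value than some alternative first-stage solution. Since $\mathcal{F}$ is downward-closed but not a matroid, \eqref{ex1} fails, so there exist $I, J \in \mathcal{F}$ with $|I| < |J|$ and $I + g \notin \mathcal{F}$ for every $g \in J \setminus I$. Among such witnesses I pick one with $|J|$ minimum, which forces $|J| = |I| + 1$; note $I \not\subseteq J$, since otherwise the single $g \in J \setminus I$ would satisfy $I + g = J \in \mathcal{F}$, contradicting the witness. Write $k := |I|$.

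Next, I assign weights $w(e) = c_I$ for $e \in I$, $w(e) = c_J$ for $e \in J \setminus I$, and $w(e) = 0$ elsewhere, with positive constants $c_I, c_J$ to be tuned, and compare the two canonical first-stage candidates $I$ and $J$. The key structural consequence of the witness property is that after interdicting any $f \in I$, the recourse is restricted to $E \setminus I$, where the only positive-weight choice is in $J \setminus I$ (of weight $c_J$). Hence the worst-case regret of $I$ lies in $\{c_I - c_J,\ c_I\}$, and symmetrically the worst-case regret of $J$ lies in $\{c_J - c_I,\ c_J\}$. Elements of $I \cap J$, if any, receive weight $c_I$, which reduces the analysis to the case $I \cap J = \emptyset$ driven by the non-empty sets $I \setminus J$ and $J \setminus I$.

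A short four-way case analysis (depending on whether every $a \in I$ admits a positive-weight recourse, and whether every $b \in J$ does) reduces the pair of desired inequalities --- $w(I) > w(J)$ on the nominal side and the robust value of $J$ strictly exceeding that of $I$ on the robust side --- to elementary linear conditions on the ratio $c_I/c_J$. In three of the four sub-cases, the interval $c_I/c_J \in \bigl((k+1)/k,\ k/(k-1)\bigr)$ (for $k \geq 2$) makes $I$ the unique nominal optimum while $J$ strictly dominates in robust value; in the exceptional sub-case where every $a \in I$ has a recourse in $J$ but some $b \in J$ has none in $I$, the separation flips and the interval $c_I/c_J \in (1,\ (k+1)/k)$ works instead, with $J$ nominal and $I$ robust. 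Each interval is non-empty since $(k+1)/k < k/(k-1)$ for $k \geq 2$ and $1 < (k+1)/k$ for all $k \geq 1$, and a tiny perturbation of one element's weight breaks any residual tie in the nominal objective.

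The main obstacle will be the boundary case $k = 1$: here the exceptional sub-case is automatic (the single element $a$ of $I$ always admits a recourse $\{b\}$ for any $b \in J$, while no $b \in J$ can be recoursed to $a$ since $\{a, b\} = I + b \notin \mathcal{F}$), so the symmetric weighting yields equal robust values. This is resolved by an asymmetric weighting inside $J$: as an illustration, the instance $\mathcal{F} = \{\emptyset, \{a\}, \{b_1\}, \{b_2\}, \{b_1, b_2\}\}$ with $w(a) = 1.4$, $w(b_1) = 0.5$, $w(b_2) = 1$ has nominal optimum $\{b_1, b_2\}$ of value $1.5$ and robust value $0.5$, while $\{a\}$ has robust value $1$, yielding the desired separation. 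The same type of slight perturbation handles any residual tie for larger $k$.
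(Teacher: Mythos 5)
Your starting point differs from the paper's: you work directly with a minimal witness $(I,J)$ of the failure of the exchange property~(\ref{ex1}), whereas the paper first invokes a refined characterization of non-matroids (Lemma~\ref{l.non-matroids}): there are maximal sets $X,Y$ and three elements $a,b,c\in X\Delta Y$ such that \emph{every} maximal feasible subset $Z$ of $X\cup Y$ either contains $a$ or contains $\{b,c\}$. That lemma is not a cosmetic convenience, and its absence is the genuine gap in your argument: you only ever compare the two candidates $I$ and $J$, but the theorem is about the optimal solution of the nominal problem, i.e.\ an optimum over \emph{all} of $\mathcal{F}$. With weight $c_I$ on $I$ and $c_J$ on $J\setminus I$, downward-closedness only tells you that no feasible set contains $I+g$ for $g\in J\setminus I$; nothing rules out a feasible set $Z$ containing, say, $k-1$ elements of $I$ together with all of $J\setminus I$. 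Such a $Z$ has weight $(k-1)c_I+|J\setminus I|\,c_J$, which exceeds both $w(I)=k\,c_I$ and $w(J)$ for every ratio $c_I/c_J<2$ in your prescribed intervals (using $|J\setminus I|\ge 2$), so $Z$ could be the unique nominal optimum --- and your case analysis says nothing about its robust value, which may well be large. Even in your $k=1$ repair, the nominal optimum is in general some maximal feasible set containing $\{b_1,b_2\}$ rather than $\{b_1,b_2\}$ itself; there the conclusion can still be salvaged because the witness property blocks every positive-weight recourse for any such superset, but that step is not written down and does not generalize to $k\ge 2$, where a nominal optimum need not contain all of $I$ or all of $J$.

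There are also secondary issues that would need attention even if the nominal optimum were pinned down: the claim that the worst-case regret of $J$ lies in $\{c_J-c_I,\,c_J\}$ ignores the adversary interdicting an element of $I\cap J$ (weight $c_I$) when no recourse is available, and the sentence ``reduces the analysis to the case $I\cap J=\emptyset$'' is asserted rather than proved; the four-way case analysis and the verification that the chosen ratio intervals simultaneously satisfy the nominal and robust inequalities are only sketched. To complete a proof along your lines you would need a structural statement controlling \emph{all} maximal feasible subsets of $I\cup J$, which is exactly what the paper's Lemma~\ref{l.non-matroids} supplies and what the raw exchange-property witness $(I,J)$ does not.
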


Recall that a non-empty downward-closed system $\mathcal{F}\subseteq 2^E$ is not a matroid  if the exchange property (\ref{ex1}) is violated. Such \emph{non-matroids} can be characterized via the following Lemma (see, e.g., \cite{fujishige2017matroids} for the proof).

\begin{lemma}\label{l.non-matroids}
	Let $\mathcal{F}\subseteq 2^E$ be a non-empty downward-closed system which is not a matroid, and let $\bar{\mathcal{F}}\subseteq \mathcal{F} $ be the collection of the inclusion-wise maximal sets  in $\mathcal{F}$. Then there exist two sets $X,Y\in \bar{\mathcal{F}}$, and three elements $\{a,b,c\}\subseteq X\Delta Y \coloneqq (X\setminus{Y})\cup (Y\setminus{X})$ such that for each set $Z\in \bar{\mathcal{F}}$ with $Z\subseteq X\cup Y$, either $a\in Z$, or $\{b,c\}\subseteq Z$. 
\end{lemma}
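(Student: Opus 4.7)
The plan is to exploit the violation of the exchange axiom~\eqref{ex1} in the non-matroid $\mathcal{F}$ in order to pinpoint two maximal sets $X, Y \in \bar{\mathcal{F}}$ and three elements $a,b,c$ with the claimed hitting property.

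First, I would select a minimal witness to the failure of~\eqref{ex1}: a pair $I, J \in \mathcal{F}$ with $|I| < |J|$ and $I + g \notin \mathcal{F}$ for every $g \in J \setminus I$, chosen to minimise $|I \cup J|$. By iteratively removing elements from $J \setminus I$ we may assume $|J| = |I| + 1$; by downward-closedness we also have $I \not\subseteq J$ (for otherwise $I + g \subseteq J \in \mathcal{F}$), and hence $|I \setminus J| \geq 1$ and $|J \setminus I| = |J| - |I \cap J| \geq 2$.

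Second, I would extend $I$ and $J$ to inclusion-wise maximal sets $X, Y \in \bar{\mathcal{F}}$. A key observation is that no $g \in J \setminus I$ can belong to $X$, since otherwise $X \supseteq I+g$ would contradict $I + g \notin \mathcal{F}$ via downward-closure. Applying a symmetric minimality argument to $Y$ gives $X \cap (I \cup J) = I$ and $Y \cap (I \cup J) = J$, so $I \setminus J \subseteq X \setminus Y$ and $J \setminus I \subseteq Y \setminus X$. I would then designate $a \in I \setminus J$ and pick two distinct elements $b, c \in J \setminus I$; by the cardinality bounds these all lie in $X \Delta Y$.

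Finally, I would verify the structural condition: every $Z \in \bar{\mathcal{F}}$ with $Z \subseteq X \cup Y$ and $a \notin Z$ contains both $b$ and $c$. For such a $Z$, its restriction to $I \cup J$ is a subset of $(I - a) \cup J$ lying in $\mathcal{F}$. Should $Z$ miss, say, $b$, then $Z \cap (I \cup J) \in \mathcal{F}$ is a subset of $(I - a) \cup (J - b)$; combined with $I + g \notin \mathcal{F}$ for every $g \in J \setminus I$, I would extract a smaller witness $(I'', J'')$ to the failure of~\eqref{ex1} (essentially by pairing a subset $I'' \subseteq I - a$ with a subset $J'' \subseteq Z \cap (J - b)$ of strictly larger cardinality), contradicting the minimality of $|I \cup J|$. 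This contradiction step, which requires carefully balancing cardinalities and simultaneously invoking minimality and downward-closure, is the place where I expect the main obstacle to lie; the remainder of the argument is then routine bookkeeping.
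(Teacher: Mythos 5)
The paper does not actually prove this lemma itself --- it cites Fujishige et al.\ for it --- so your attempt can only be judged on its own terms, and unfortunately the construction does not work: the triple $(a,b,c)$ your recipe produces need not satisfy the conclusion. Concretely, take $E=\{1,2,3,4\}$ and let $\mathcal{F}$ consist of all subsets of $\{1,2\}$, of $\{3,4\}$, and of $\{1,4\}$. This is non-empty, downward-closed and not a matroid, and the only violating pairs (necessarily of minimum $|I\cup J|=3$) are $(I,J)=(\{2\},\{3,4\})$ and $(\{3\},\{1,2\})$. For the first, your recipe forces $a=2$, $\{b,c\}=\{3,4\}$, $X=\{1,2\}$, $Y=\{3,4\}$; but $Z=\{1,4\}\in\bar{\mathcal{F}}$ satisfies $Z\subseteq X\cup Y$, $a\notin Z$ and $\{b,c\}\not\subseteq Z$. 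The second witness fails the same way. The lemma does hold for this $\mathcal{F}$ (e.g.\ with $X=\{1,2\}$, $Y=\{3,4\}$, $a=1$, $b=3$, $c=4$), but the only working values of $a$ are $1$ and $4$, neither of which lies in $I\setminus J$ for either minimal witness; so the defect is in the choice of the triple, not merely in the write-up.

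The step that collapses is precisely the one you flagged as the expected obstacle: from a maximal $Z\subseteq X\cup Y$ with $a\notin Z$ and $b\notin Z$ you cannot in general extract a smaller violating pair. In the example, $Z\cap(I\cup J)=\{4\}$ and the only candidate pair $(\emptyset,\{4\})$ is not a violation, since $\{4\}\in\mathcal{F}$. The structural reason is that a minimal violating pair $(I,J)$ only constrains sets inside $I\cup J$, whereas the sets $Z$ in the conclusion may freely combine elements of $X\setminus I$, $Y\setminus J$ and $I\cup J$, so no contradiction with minimality of $|I\cup J|$ is available. A secondary, unjustified step is the claim $Y\cap(I\cup J)=J$ \enquote{by a symmetric minimality argument}: the violation of the exchange axiom is one-sided (it speaks about augmenting $I$ from $J$, not $J$ from $I$), and all one gets for free is that \emph{some} element of $I\setminus J$ avoids $Y$ (otherwise $I\cup J\subseteq Y$ would put $I+g$ in $\mathcal{F}$). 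A correct proof must select $a,b,c$ via a global argument over all maximal sets contained in $X\cup Y$, rather than reading them off a single local minimal violation.
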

\begin{proof}[Proof of Theorem \ref{thm:non-matroids}.]
	Take $X,Y\in \bar{\mathcal{F}}$, and the three elements $\{a,b,c\}\subseteq X\Delta Y$ as in the statement of Lemma \ref{l.non-matroids}.
	We set the weights of  all elements in $E\setminus{\{a,b,c\}}$ to zero,  and  $w(a) = 3$, $w(b) = w(c) = 2$.
	Then the optimal first-stage solution for the $(1, 1)$-robust counterpart is to select either $S=\{b\}$, or $S=\{c\}.$ Then, if the adversary interdicts, one can still add $\{a \}$ and achieve a second-stage value of 3.
	Otherwise, if the adversary does not interdict, one can extend to the optimal solution of the underlying problem $\{b, c\}$ of value 4.
	
	In contrast, if the first-stage solution $S$ is equal to a  nominal optimal solution, then $S$ necessarily contains $\{b, c \}$, and  one can only achieve a value of $2$. 
	Because if either $b$ or $c$ is interdicted, one cannot extend $S$ by an element of positive value in the recourse. 
\end{proof}

	\section{Robust Stable Set in Interval Graphs}
\label{sec:robust-interval-stable-set}
In this section we consider \robuststableset in interval graphs.
For convenience, we use the standard interval representation of an interval graph $G=(V,E)$ where each node is represented as an interval in $\mathbb{R}$, and a stable set in $G$ corresponds to a selection of  pairwise-disjoint intervals. 
We refer to the task of finding a maximum-weighted set of disjoint intervals as \intervalselection (\intervalselectionshort) and its robust counterpart as \robustintervalselection (\robustintervalselectionshort). 
Further, we refer to the intervals as \emph{jobs} as it is common in scheduling.
It is known that \intervalselectionshort is solvable in polynomial time via dynamic programming \cite{zbMATH03513839}.
We formally define the \robustintervalselection problem as follows.

\begin{problemenv}
	\problemtitle{(1,1)-\robustintervalselection (\robustintervalselectionshort)}
	\probleminput{A set of intervals $I = \{i_1, i_2, \dots, i_n\}$, where $i_j = [a_j, b_j)$, $a_j < b_j$, $a_j, b_j\in \mathbb{N}$, a weight function $w \colon I\to \mathbb{R}_{\geq 0}$, and a family $\mathcal{F}\subseteq 2^{I}$ of sets of pairwise-disjoint intervals in $I$.}
	\problemquestion{Find a set $S \in \mathcal{F}$ which maximizes the total weight $w(S - i_j + i_\ell)$ for the deletion of any interval $i_j$ and the best possible recourse by another interval $i_\ell$ thereafter.
	}
\end{problemenv}

\noindent Note that $\{\emptyset\}\in I$, since we redefine each of our ground sets such that it includes~$\{\emptyset\}$;
this way the empty set in $I$ can be used to show the possible empty deletion and empty recourse. 
We prove the following main result. 

\begin{theorem}
	\label{thm:interval-selection-poly}
	There is a polynomial-time algorithm for $(1,1)$-\robustintervalselection.
\end{theorem}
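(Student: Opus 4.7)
My plan is to exploit the tight connection between \robustintervalselectionshort and its bounded-regret counterpart \brris described in the introduction. For $k=\ell=1$ there are only $O(|V|^2)$ distinct relevant values of $\lambda$, namely the values $w(f)-w(e)$ over pairs of intervals together with the boundary cases $w(f)$ and $\lambda\le 0$. If I can solve \brris in polynomial time for a fixed $\lambda$, then enumerating over all candidate $\lambda$ and returning the set attaining $\max_{\lambda}\bigl(w_{opt}(\lambda)-\lambda\bigr)$ yields an optimal first-stage solution to \robustintervalselectionshort. Hence the task reduces to producing a polynomial-time algorithm for \brris for an arbitrary fixed $\lambda$.

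Fix $\lambda$. Call an interval $f$ \emph{heavy} if $w(f)>\lambda$ and \emph{light} otherwise; a light $f\in S$ is automatically safe since the empty recourse already yields regret $w(f)\le\lambda$. A set $S$ of pairwise disjoint intervals is therefore feasible for \brris exactly when every heavy $f\in S$ admits a \emph{backup} interval $b_f\notin S$ that is disjoint from $S-f$ and satisfies $w(b_f)\ge w(f)-\lambda$. Since the intervals of $S$ are linearly ordered on the real line, any backup $b_f$ of a heavy interval $f$ must lie entirely in the open gap between the two neighbours of $f$ in $S$ (one-sided for the leftmost and rightmost elements). In particular, once the two neighbours of a heavy $f$ in $S$ are fixed, the question of whether a sufficiently heavy backup for $f$ exists can be answered independently of the rest of $S$, because distinct heavy primaries impose constraints on disjoint regions of the line.

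This locality is the foundation of the dynamic program I would design for \issh (\Issh), the generalization of \intervalselectionshort that the paper promises. I would sort all intervals by right endpoint and sweep left to right, maintaining a DP state that records the right endpoint of the last interval included in $S$ together with a flag indicating whether the primary just before it, if heavy, already has an admissible backup reserved in the gap to its left. The transitions either skip the next interval or take it as the next primary; taking it freezes the gap between the previous and the new primary, at which point the algorithm verifies that this gap contains an unselected interval of weight at least $w(f^{\mathrm{prev}})-\lambda$ whenever $f^{\mathrm{prev}}$ is heavy. These verifications can be answered in $O(1)$ after precomputing, for every pair of endpoints $(x,y)$, the maximum weight of an interval contained in $(x,y)$, which takes polynomial time. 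The state space and transitions are polynomial in $|V|$, and trace-back recovers a maximum-weight feasible first-stage solution.

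The main obstacle is precisely the design of the DP so that backup feasibility is enforced without blowing up the state space. Two pitfalls must be avoided: tracking the backup status of all selected heavy primaries simultaneously would give an exponential state, while allowing a backup to extend beyond the immediate gap bracketing its primary is incorrect because such an interval would collide with other primaries. The locality observation above resolves both issues simultaneously, since only the two neighbours of a primary affect its admissible backups, so the DP can finalize (or discard) the backup status of the previous primary at the moment a new primary is committed. Handling the boundary cases — heavy primaries that sit at the left or right end of $S$, where only a one-sided gap constrains the backup — and the interaction with light primaries carefully will be the most delicate part of the argument, but these are local issues that do not threaten the polynomial running time.
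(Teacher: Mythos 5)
Your first step (reducing \robustintervalselectionshort to \brris and enumerating the $O(|I|^2)$ candidate values of $\lambda$) matches the paper. The gap is in your treatment of the bounded-regret problem: the locality claim on which your whole DP rests is false. You assert that any backup $b_f$ for a heavy $f\in S$ must lie in the gap bracketed by the neighbours of $f$ in $S$. But feasibility only requires $b_f$ to be disjoint from $S-f$; if $b_f$ is disjoint from \emph{all} of $S$, it may sit in any gap of $S$, arbitrarily far from $f$, and it then serves as a backup for \emph{every} element of $S$ simultaneously (only one element is interdicted at a time). Concretely, take $[0,1)$, $[2,3)$, $[4,5)$ each of weight $10$, one interval $[10,11)$ of weight $9$, and $\lambda=1$: the set $S=\{[0,1),[2,3),[4,5)\}$ is feasible for \brris because $[10,11)$ backs up each of the three, yet your DP finds no admissible backup for $[0,1)$ or $[2,3)$ in their local gaps and rejects $S$. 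So your DP solves a strictly more constrained problem and can return a suboptimal first-stage solution. The paper resolves exactly this by splitting backups into \private ones (which overlap their primary and are therefore genuinely local) and \universal ones (disjoint from all of $S$, hence global), proving that one \universal backup suffices for all jobs lacking a \private backup, guessing its identity by enumeration, and only then reducing to a local interval problem (\issh) solvable by a sweep DP similar in spirit to yours.

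A second, related omission: the paper must also allow $\lambda<0$ (a non-maximal first-stage solution can have negative regret, since after a harmless or empty interdiction one can still add an interval), and the constraint $\Delta(f,S)\le\lambda$ ranges over $f\in E\cup\emptyset$, not just $f\in S$. For $\lambda<0$ your notion of ``light'' intervals is vacuous and, more importantly, feasibility requires that something of weight at least $-\lambda$ can be added even when the \universal backup itself is interdicted; the paper handles this with an additional \extra backup guessed alongside the \universal one. Your proposal lists $\lambda\le 0$ among the candidate values but the DP does not enforce either requirement. Both gaps are repairable, but they require precisely the \universal/\extra backup machinery that your locality argument was meant to avoid.
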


Later in Subsection \ref{sec:generalization-scheduling}, we show how to generalize this theorem to $(1,p)$-\robustintervalselection for any constant $p\in \mathbb{N}$, in which we can add up to $p$ intervals as recourse after the deletion of a single interval.

\subsection{Algorithm Outline and Proof Roadmap}
Here, we outline our algorithmic approach for the proof of Theorem~\ref{thm:interval-selection-poly}.
As mentioned in the introduction, one can reduce the robust counterpart of a nominal problem to its bounded-regret version~(\ref{eq:opt:sol:bounded-regret}).
The bounded-regret version of \robustintervalselectionshort, which we denote as \boundedregretintervalselection (\boundedregretintervalselectionshort), asks to find $S \in \mathcal{F}$ which optimizes the following objective function:
\begin{align}\label{eq:opt:sol:RIS-bounded-regret}
	\max_{\substack{S\in \mathcal{F}\\ \Delta(i_j, S) \leq \lambda \colon\forall i_j \in I }} ~w(S),  \tag{$\lambda$-RIS}
\end{align}

\noindent
where $\Delta(i_j, S) = w(S) - \max \{w(S-i_j+i_{\ell}) \mid 	i_{\ell}\in I-i_j,  
S-i_j+i_{\ell} \in \mathcal{F}\}.$
The value $\Delta(i_j, S)$ defines the loss in the objective value occurred due to the change in the solution  
in case $i_j$ fails. 
Note that, if $i_j \in S$, then we simply have 
$\Delta(i_j, S) =  w(i_j) -   \max \left\{w(i_{\ell}) \mid i_{\ell} \in I\setminus S,\ S-i_j+i_{\ell} \in \mathcal{F} \right\}$.

Due to the established tight connection between~(\ref{eq:opt:sol:bounded-regret}) and~(\ref{eq: (1,1)recoverablerobustproblem}) formulations, a polynomial-time algorithm for \boundedregretintervalselectionshort implies  a polynomial-time algorithm for \robustintervalselectionshort, since here the number of different values of regrets is bounded by $\mathcal{O}(|I|^2)$. 
As a result, we have the following proposition.

\begin{proposition}
	\label{prop:interval-selection-bounded-regret}
	A polynomial-time algorithm for the problem \boundedregretintervalselection
	implies a polynomial-time algorithm for $(1,1)$-\robustintervalselection.
\end{proposition}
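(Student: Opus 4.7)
The plan is to reduce \robustintervalselectionshort to polynomially many instances of \boundedregretintervalselectionshort by enumerating over a polynomial-sized set of candidate regret values, and then combining the resulting solutions. Define the candidate set
\[
\Lambda = \bigl\{\, w(i_j) - w(i_\ell) : i_j, i_\ell \in I \cup \{\emptyset\} \,\bigr\},
\]
with the convention $w(\emptyset) := 0$; then $|\Lambda| = O(|I|^2)$. The key structural observation is that for any first-stage $S$ and any $f \in I \cup \{\emptyset\}$, the minimum in the definition of $\Delta(f,S)$ is achieved either by $i_\ell = \emptyset$ or by some $i_\ell \in I$, so $\Delta(f,S)$ equals $w(f) - w(i_\ell)$ with $f, i_\ell \in I \cup \{\emptyset\}$. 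In particular, the worst-case regret $\max_{f \in I \cup \{\emptyset\}} \Delta(f,S)$ lies in $\Lambda$.

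The algorithm is then: for each $\lambda \in \Lambda$, invoke the assumed polynomial-time algorithm for \boundedregretintervalselectionshort to obtain $w_{opt}(\lambda)$ together with a witness $S_\lambda$; return an $S_{\lambda^*}$ with $\lambda^* \in \arg\max_{\lambda \in \Lambda} \bigl( w_{opt}(\lambda) - \lambda \bigr)$. Since $|\Lambda|$ is polynomial and each call runs in polynomial time by hypothesis, the overall running time is polynomial.

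For correctness, I would establish two matching inequalities with the optimum robust value $V^*$ of \robustintervalselectionshort. Upper bound on our output: for any $\lambda \in \Lambda$, feasibility of $S_\lambda$ for $\lambda$-RIS ensures $\max_f \Delta(f, S_\lambda) \leq \lambda$, so using $S_\lambda$ as first-stage achieves robust value at least $w(S_\lambda) - \lambda = w_{opt}(\lambda) - \lambda$; maximising over $\lambda$ shows our output attains at least $\max_{\lambda \in \Lambda}(w_{opt}(\lambda) - \lambda)$. Lower bound: let $S^*$ be an optimal first-stage solution to \robustintervalselectionshort with worst-case regret $\lambda^{**}$; by the structural observation $\lambda^{**} \in \Lambda$, and $S^*$ is feasible for $\lambda^{**}$-RIS, so $w_{opt}(\lambda^{**}) \geq w(S^*)$ and $V^* = w(S^*) - \lambda^{**} \leq w_{opt}(\lambda^{**}) - \lambda^{**} \leq \max_{\lambda \in \Lambda}(w_{opt}(\lambda) - \lambda)$. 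Combining the two bounds yields the claim.

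The only point requiring care is that $\lambda$ can be negative: when the first-stage solution is sparse enough that every interdiction admits a strictly more valuable recourse, the worst-case regret is negative, and $\Lambda$ must contain such values for the enumeration to capture $V^*$. Beyond this bookkeeping, no substantive obstacle arises, since all the algorithmic work is delegated to the \boundedregretintervalselectionshort subroutine.
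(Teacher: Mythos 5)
Your proposal is correct and follows essentially the same route as the paper: the authors also reduce \robustintervalselectionshort to \boundedregretintervalselectionshort by enumerating the $O(|I|^2)$ candidate regret values (including negative ones) and maximizing $w_{opt}(\lambda)-\lambda$, as sketched in their introduction and invoked in the one-line proof of this proposition. Your two-inequality correctness argument just spells out in more detail what the paper leaves implicit.
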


Given Proposition~\ref{prop:interval-selection-bounded-regret}, it remains to provide a polynomial-time algorithm for \boundedregretintervalselectionshort for an arbitrary fixed value of $\lambda \in \mathbb{R}$.
Recall that $\lambda$ corresponds to the maximum loss (regret) in the objective value for any job that gets deleted. 
Therefore, it will be convenient to hedge against regrets larger than $\lambda$ by defining for each job $i_j$ a backup of weight at least $w(i_j)-\lambda$;
this backup can then be used in the recourse if job $i_j$ fails. 
Further, we give a formal definition of a backup of an interval.

\paragraph{Defining backups.} Let $\mathcal{F}(I)$ be a collection of sets of pairwise-disjoint intervals in $I$, and $S\in \mathcal{F}(I)$. 
We define, for each element $i_j \in I$, a {\em backup} element $\backup(i_j, S)$: 
\[ 
\backup (i_j, S) \in \argmax_{\substack{i_\ell \in I \setminus (S + i_j)\\ S - i_j + i_\ell \in \mathcal{F}(I)}} \ w(i_\ell) \ . 
\]
Informally speaking, $\backup (i_j, S)$ is a best possible recourse that can be added to the remaining  solution $S - i_j$ when $i_j$ is interdicted.
Hence, $\backup (i_j, S)$ serves as a backup for $i_j$.

Note that, for a job $i_j\in S$, its backup  $\backup (i_j, S)=i_{j'}$ may or may not intersect with $i_j$.
Hence, the backup job $i_{j'}$ of $i_j$ satisfies one of the following two conditions, depending on whether or not $i_j$ and $i_{j'}$ intersect.
\begin{itemize}
	\item[(a)] $S + i_{j'} \notin \mathcal{F}(I)$, or
	\item[(b)] $S + i_{j'} \in \mathcal{F}(I)$.
\end{itemize}

In case (a), job $i_{j'}$ cannot be used as a backup for any other job $i \neq i_j$, since $i_j\in S$ and $i_j\cap i_{j'}\neq \emptyset$.
Hence, we call such a backup the \emph{\private} backup of job $i_j$.
In case (b), 
job $i_{j'}$ can be used as a backup for any job $i \in I\setminus \{i_{j'}\}$, 
since $i_{j'}$ does not intersect with any job in $S$, and hence we call such a backup a \emph{\universal} backup.
Note that since a \universal backup, say $u$, can be used as a backup for any job $i \in I- u$, it is enough to have one such backup for all jobs that might not have a \private backup.
For a first-stage solution set $S$, we want a \universal backup $u$ to be in $\argmax \{ w(i_{j'}) \mid i_{j'} \in I \setminus S \text{ and } S + i_{j'} \in \mathcal{F}(I) \}$. 

For the case where $\lambda<0$ and  the universal backup $u$ is interdicted, if there is no recourse, the value $\Delta(u,S)=0$ is strictly larger than $\lambda$.
This contradicts the feasibility of $S$.
To deal with this issue, we choose an \emph{extra} backup $b$ along with our universal backup $u$. In particular, we aim to have $b \in \argmax \{ w(i_{j'}) \mid i_{j'} \in I 
\setminus (S + u) \text{ and } S + i_{j'} \in \mathcal{F}(I)\}$.
Moreover, we have $w(u) \geq w(b)$. 
Note that, since we have $\{\emptyset\} \in I$, if there is no candidate job to serve as a universal backup or an extra backup, we have $u=\{\emptyset\}$ or $b =\{\emptyset\}$.
All in all, we solve an instance of the \boundedregretintervalselectionshort problem by first guessing a pair of \universal and extra backup, $u,b \in I$, and then solving the problem.

In Section~\ref{sec:interval:algorithm}, we show that for the correct guess of  \universal and extra backup, \boundedregretintervalselectionshort can be reduced to solving a more general interval scheduling problem, which we call \issh (\Issh).
Below we give a formal definition of \issh (\Issh):
An instance $\mathcal{I}=(I,w)$ of \Issh consists of a set $I$ of jobs and a weight function $w \colon I \rightarrow \mathbb{R}_{\geq 0}$. 
Each job $i_j\in I$ potentially has two half open intervals $[\ell_j, r_j)$ and $[a_j, b_j)$ satisfying $\ell_j \leq a_j < b_j \leq r_j$ associated with it. 
The inner interval $[a_j, b_j)$ is \emph{red} and the interval $[\ell_j, r_j)$ is \emph{blue}.
A feasible solution is a subset $S$ of jobs such that for any two jobs $i_j, i_k \in S$, we have that $[a_j, b_j)$ and $[\ell_k, r_k)$ do not intersect and that $[a_k, b_k)$ and $[\ell_j, r_j)$ do not intersect. 
That is, the red interval of any job in the solution is not allowed to intersect with the 
blue interval of any other job in the solution.	
The task is to pick a feasible solution of maximum weight. We show in Theorem~\ref{thm:interval-selection-soft-hard} in Section~\ref{sec:interval:algorithm} that this problem can be solved in polynomial time.

\subsection{The Algorithm for Bounded-Regret Interval Scheduling}
\label{sec:interval:algorithm}

We present an efficient algorithm for \boundedregretintervalselection for a fixed value of $\lambda$.
\begin{restatable}[]{theorem}{boundedregretinterval}
\label{thm:IS:helper}
Given a polynomial-time algorithm for \issh, there is a polynomial-time algorithm for \boundedregretintervalselection. 
\end{restatable}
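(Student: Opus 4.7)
The plan is to reduce \boundedregretintervalselectionshort to \Issh via enumeration over a small set of ``hard'' choices, then invoke the assumed polynomial-time algorithm for \Issh.

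First, I would formalize the notion of backup. For a candidate first-stage solution $S\in\mathcal{F}$ and a job $i_j\in S$, a \emph{valid backup} is either the empty recourse (legal only when $w(i_j)\le\lambda$) or a job $i_\ell\notin S$ with $w(i_\ell)\ge w(i_j)-\lambda$ that is disjoint from every $i\in S\setminus\{i_j\}$. Any such $i_\ell$ must fit inside the ``slot'' around $i_j$ left by its two neighbors in $S$. I would then call $i_\ell$ a \private backup if its interval is contained in the open interval between the end of $i_j$'s left neighbor in $S$ and the start of $i_j$'s right neighbor in $S$, and a \universal backup if it extends beyond this slot while still avoiding every $i\in S\setminus\{i_j\}$.

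The key structural step in the plan is to show that any optimal solution to \boundedregretintervalselectionshort requires at most one \universal backup. I would prove this via an exchange lemma: if two jobs $i_j, i_{j'}\in S$ both rely on \universal backups $i_u, i_{u'}$, then the geometry of overlapping intervals should allow a local modification of $S$ (for instance, absorbing one of the \universal backups into $S$ in place of an adjacent member, or rerouting one job to the empty recourse) that preserves feasibility and total weight while strictly reducing the number of \universal backups. I expect this to be the main obstacle of the whole proof, since the interaction of a long \universal backup with the jobs it straddles must be analyzed carefully, and the argument has to cover every relative position of $i_j$ and $i_{j'}$ in $S$ together with all weight configurations compatible with the regret bound $\lambda$.

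Granting this structural lemma, the algorithm enumerates over the $O(n)$ possible choices for the \universal backup (plus the option of having none). For each guess $i_u$, I commit to the unique job $i_j$ that $i_u$ backs up by forcing $i_j$ into the solution, fixing $i_u$ as its recourse, and discarding every interval incompatible with this commitment. The residual task is to select a maximum-weight disjoint family of intervals such that each selected job admits a \private backup and the \private backups chosen for distinct jobs do not conflict with one another. Because \private backups live in local slots, these side constraints have a local, color-like structure that I would encode as an instance of \Issh: each candidate job would carry colors describing which \private backup it could use and which neighboring backup choices that use would forbid. Solving each such \Issh instance in polynomial time via the assumed algorithm and returning the best outcome over the $O(n)$ guesses then yields a polynomial-time algorithm for \boundedregretintervalselectionshort, as required.
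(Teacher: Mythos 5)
Your high-level architecture (enumerate over a small set of candidate backups, then hand the residual selection problem to \Issh) is the same as the paper's, but the proposal rests on a conceptual error about \universal backups that breaks both your structural lemma and your algorithm. In the paper a \universal backup is a job $u$ disjoint from \emph{every} job of the first-stage solution $S$; precisely because of this, a single such $u$ can simultaneously serve as the recourse for \emph{all} jobs of $S$ that lack a \private backup, and the statement that one \universal backup suffices is immediate (take the maximum-weight job disjoint from $S$: if $w(i_j)-w(u')\le\lambda$ for some feasible $u'$, then also $w(i_j)-w(u)\le\lambda$). No exchange argument is needed, and the statement you actually plan to prove by exchange --- that at most one job of $S$ \emph{relies on} a \universal backup --- is false. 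Take $\lambda=0$ and three pairwise disjoint jobs $A$, $B$, $u$ of weight $10$ each: the optimum is $S=\{A,B\}$, both $A$ and $B$ must use $u$ as their backup, and absorbing $u$ into $S$ destroys feasibility because no weight-$10$ job then remains outside $S$. Consequently your algorithmic step (``commit to the unique job $i_j$ that $i_u$ backs up \ldots\ the residual task is to select \ldots\ jobs such that each selected job admits a \private backup'') discards exactly these solutions. The guessed $u$ must be treated as a shared resource available to every selected job, which the paper encodes by adding, for each job $i_j$, a backup-free \Issh copy of $i_j$ whose associated loss is $w(i_j)-w(u)$.

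Two further gaps. First, you never address the constraint $\Delta(f,S)\le\lambda$ for $f\notin S$, which is binding when $\lambda<0$: if the \universal backup $u$ itself is interdicted, the solution must still be extendable by a job of weight at least $-\lambda$, which is why the paper enumerates over a \emph{pair} $(u,b)$ including an \extra backup $b$ with $w(u)\ge w(b)\ge-\lambda$. Second, the encoding into \Issh is only gestured at (``colors describing which \private backup it could use''); the crux of the reduction is a concrete gadget: for each ordered pair $(i_j,i_k)$ with $i_k$ a potential \private backup of $i_j$, one creates a single \Issh job whose red interval is the interval of $i_j$ and whose blue interval is the overhang of $i_k$ past $i_j$, so that the red/blue intersection rules express exactly that a backup must avoid all selected jobs while distinct backups may conflict with one another (only one job fails at a time). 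Without this gadget, and without the corrected \universal-backup semantics, the reduction does not go through.
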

\begin{proof}
To prove the theorem, we first describe the reduction from \boundedregretintervalselection to \issh(\Issh) and then prove its correctness.

Let $\mathcal{I}=(I,w)$ be an instance of \boundedregretintervalselectionshort. 
In the first step, we enumerate over all pairs of jobs $u, b \in I$~ such that $w(u) \geq w(b) \geq - \lambda$, and corresponding to each such pair, we construct an instance $\mathcal{I}_{u, b}'(\lambda)=(I_{u, b}'(\lambda), w')$ of \Issh.

First, fix a pair $u,b\in I$ corresponding to the universal backup and the \extra backup, respectively, each of which may be empty. 
Note that there are three cases: (i) both a \universal and \extra backup exist, (ii) an \extra backup exists, but a \universal backup does not, or (iii) neither a \universal nor an \extra backup exists.

For each job $i_j\in I$, we add the following set of jobs in $I_{u,b}'$ with weights $w'$, which cover all possible recourse actions for $i_j$. 
We further define a function $\lambda_{u, b}$ to model the regret and use this function to prune $I_{u,b}'$ in order to obtain $I_{u, b}'(\lambda)$.

\begin{itemize}
	\item The first set of jobs covers the possibility of using the \universal backup $u$ as recourse. 
	Thus, for each job $i_j \in I \setminus \{ u, b \}$ in \boundedregretintervalselectionshort, we add a job $i'_j$ to $I_{u, b}'$ in \Issh with $w'(i'_j) = w(i_j)$. 
	Moreover, the red and blue interval of $i'_j$ is set to $[a_j, b_j)$, which is the interval of $i_j$. 
	We set $\lambda_{u, b}(i'_j) = w(i_j) - w(u)$.
	\item The second set of jobs corresponds to the possibility of using a \private backup. 
	Thus, for each ordered pair of jobs $(i_j, i_k)$ from set  $I \setminus \{ u, b \}$, $i_j \neq i_k$, such that $i_j$ and $i_k$ intersect, we add a job $i^k_j$ to $I_{u, b}'$.
	We set $w'(i^k_j) = w(i_j)$ and $\lambda_{u, b}(i^k_j) = w(i_j) - w(i_k)$.
	Let $\ell_{j,k} \coloneqq \min \{a_j, a_k \}$ and $r_{j,k} \coloneqq \max \{ b_j, b_k \}$.
	Then we define the red interval of $i^k_j$ to be $[a_j, b_j)$ and the blue interval to be $[\ell_{j,k}, r_{j,k})$. 
\end{itemize}

\noindent\textit{Pruning:} Finally we want to make sure that the red interval of a job does not intersect with the jobs corresponding to the universal and extra backups.
For that, we simply delete  all those jobs from $I'_{u,b}$ which intersect with $u$ and $b$.
Further, we define $I'_{u, b}(\lambda) \coloneqq \{ i'_j \in I'_{u, b} \mid \lambda_{u, b}(i'_j) \leq \lambda \} $
to include all jobs of $I'_{u, b}$ that have a $\lambda_{u, b}$-value of at most $\lambda$.
As a result we get our instance $\mathcal{I}'_{u, b}(\lambda)= (I'_{u, b}(\lambda), w')$, here $w'$ is restricted to the set $I'_{u, b}(\lambda)$. 
See Figure \ref{fig:rroc:ISto:ISC} for reference.
Let $\mathcal{F}(I'_{u, b}(\lambda))$ be the set of feasible solutions of $\mathcal{I}'_{u, b}(\lambda)$.

\begin{figure}[t]
	\begin{center}		
		\begin{minipage}[b]{0.44\linewidth}
			\begin{ichart}[0.62,yscale=1.2]{5}{10}
				\intervalnominal{1}{4}{6}{$1$}{$i_1$}
				\intervalnominal{2}{3.5}{7.5}{$2$}{$i_2$}
				\intervalnominal{3}{1.5}{3}{$10$}{$i_3$}
				\intervalnominal{3}{8}{10}{$10$}{$i_4$}
				\intervalnominal{4}{0}{7}{$13$}{$i_5$}
				\intervalnominal{5}{6}{9}{$14$}{$i_6$}
				\draw[black!40, thick, dashed, rounded corners] (-.7,0) -- (10.7, 0) -- (10.7,-4) -- (-.7, -4) -- cycle;
			\end{ichart}
		\end{minipage}
		\hfill
		\begin{tikzpicture}
			\node at (0, 0) {$\longrightarrow$};
			\node at (0, -2) {};
		\end{tikzpicture}
		\hfill
		\begin{minipage}[b]{0.44\linewidth}
			\begin{ichart}[0.62,yscale=1.2]{10}{10}
				\intervaliscblue{1}{1.5}{3}{}{}
				\intervaliscblue{2}{8}{10}{}{}
				\intervaliscblue{3}{0}{7}{}{}
				\intervaliscblue{4}{6}{9}{}{}
				\intervaliscblue{5}{0}{7}{}{}
				\intervaliscblue{6}{6}{10}{}{}
				\intervaliscblue{7}{0}{7}{}{}
				\intervaliscblue{8}{6}{10}{}{}
				\intervaliscblue{9}{0}{9}{}{}
				\intervaliscblue{10}{0}{9}{}{}
				\intervalnominal{1}{1.5}{3}{$10$}{$i_3'$}
				\intervalnominal{2}{8}{10}{$10$}{$i_4'$}
				\intervalnominal{3}{0}{7}{$13$}{$i_5'$}    				\intervalnominal{4}{6}{9}{$14$}{$i_6'$}
				\intervalnominal{5}{1.5}{3}{$10$}{$i_3^5$}
				\intervalnominal{6}{8}{10}{$10$}{$i_4^6$}    			\intervalnominal{7}{0}{7}{$13$}{$i_5^3$}    			\intervalnominal{8}{6}{9}{$14$}{$i_6^4$}
				\intervalnominal{9}{0}{7}{$13$}{$i_5^6$}
				\intervalnominal{10}{6}{9}{$14$}{$i_6^5$}
				\draw[black!40, thick, dashed, rounded corners] (-.7,0) -- (10.7, 0) -- (10.7,-7) -- (-.7, -7) -- cycle;
				\node () at (9.3,-0.2) {\scriptsize $\lambda = -1$};
				\node () at (9.12,-0.4) {\scriptsize $u = i_2$};
				\node () at (9.17,-0.62) {\scriptsize $b = i_1$};
				\node () at (2.2,-.6) {$\tikzxmark$};
				\node () at (9,-1.2) {$\tikzxmark$};
				\node () at (3.5,-1.8) {$\tikzxmark$};
				\node () at (7.5,-2.4) {$\tikzxmark$};
				\node () at (3.5,-4.2) {$\tikzxmark$};
				\node () at (7.5,-4.8) {$\tikzxmark$};
				\node () at (3.5,-5.4) {$\tikzxmark$};
				\node () at (7.5,-6) {$\tikzxmark$};
			\end{ichart}
		\end{minipage}
	\end{center}
	\caption{Reduction from $(1,1)$-\robustintervalselection to \issh. Here, we guessed $\lambda = -1$, and $i_2$ as the universal and $i_1$ as the extra backup. 
		The crossed jobs in the instance of \Issh are the pruned jobs.
		The jobs $i_5', i_6', i_5^3, i_6^4, i_5^6$, and $i_6^5$ are pruned because their red part overlaps with the universal backup.
		Additionally, the jobs $i_3'$ and $i_4'$ are pruned because their $\lambda'$ values are greater than the current $\lambda =-1$ value.}
	\label{fig:rroc:ISto:ISC}
\end{figure}

In the remainder of the proof, we show that an optimal solution $S^\star$ of $\mathcal{I}$ is in one-to-one correspondence with the optimal solution  $S$ of $\mathcal{I}'_{u, b}(\lambda)$ which maximizes the value $w'(S)$ over all  $u, b \in I$. 
Moreover, the set $S$ has the same weight and regret value as that of $S^\star$.
To get an intuition, first observe that all jobs in $I'_{u, b}(\lambda)$ correspond to the jobs in $I$; we simply add some blue intervals to them. 
Blue intervals of jobs correspond to their respective backup jobs, and hence they are allowed to intersect in a chosen solution of \Issh. 
Furthermore, note that for all jobs $i_j' \in I'_{u, b}(\lambda)$ we have $\lambda_{u, b}(i'_j) \leq \lambda $, which encodes that the regret we suffer if the corresponding interval $i_j$ is interdicted is indeed at most $\lambda$.
Now, it remains to show that from an optimum solution $S^\star$ of~$\mathcal{I}$, we can compute an optimum solution $S$ of $\mathcal{I}'_{u, b}(\lambda)$ of the same weight, and from an optimum solution $S$ to $\mathcal{I}'_{u, b}(\lambda)$ we can compute an optimum solution $S^\star$ to $\mathcal{I}$ of the same weight and regret $\lambda$.

Let $S^\star$ be an optimum solution to $\mathcal{I}$. 
We now construct in polynomial time a solution to the corresponding instance $\mathcal{I}'_{u, b}(\lambda)$ of \issh of the same weight.
By definition we have that $w(i_j) - w( \backup (i_j, S^\star)) \leq \lambda$. Without loss of generality, we assume that $S^\star$ does not contain $u$ and $b$. 

By construction of $\mathcal{I}'_{u, b}(\lambda)$, for each $i_m \in S^\star$ that uses the \universal backup $u$ when $i_m$ fails, there is a new job $i'_m$ in $I_{u, b}'(\lambda)$ with $w'(i_m') = w(i_m)$ and $\lambda_{u, b}(i'_m) = w(i_m) - w(u)$. 
Since $ w(i_m) - w(\backup (i_m, S^\star)) \leq \lambda$, 
we have that $\lambda_{u, b}(i'_m) \leq \lambda$, and hence $i'_m \in I'_{u, b}(\lambda)$. 
We let 
\[ 
S_U \coloneqq \{ i'_m \mid i_m \in S^\star \text{ and $i_m$ has \universal backup} \} \ .
\]

Now consider any job $i_m \in S^\star$ that has a \private backup $i_k \coloneqq \backup (i_m, S^\star)$.
By construction of~$I'_{u, b}(\lambda)$, there is a job $i^k_m \in \mathcal{I}'_{u, b}(\lambda)$ with $w'(i^k_m)= w(i_m)$  and  $\lambda_{u, b}(i^k_m) \leq w(i_m) - w(i_{k}) \leq \lambda$.
Hence,~$i^k_m$ is contained in $I_{u, b}'(\lambda)$.
We let 
\[
S_P \coloneqq \{ i^k_m \mid i_m \in S^\star \text{ and $i_m$ has a \private backup $i_k \coloneqq \backup (i_m, S^\star)$} \} \ .
\]

Let $S = S_P \cup S_U$.
We claim that $S$ is a feasible solution to $\mathcal{I}'_{u, b}(\lambda)$.
This directly follows from the definition of blue and red intervals: 
First, observe that all jobs in $S$ are indeed contained in~$I_{u, b}'(\lambda)$.
Second, note that no two red intervals of jobs in $S$ intersect, since $S^\star$ is a feasible solution before the failure of any job.
Third, observe that no blue interval intersects with any red interval in $S$, as
$S^\star - i_m + \backup (i_m, S^\star)$ is feasible for any job $i_m \in S^\star$ and its backup $\backup (i_m, S^\star)$.

Next, we show how to construct in polynomial time a feasible solution to $\mathcal{I}$ from an optimum solution to $\mathcal{I}'_{u, b}(\lambda)$ with the same weight and regret $\lambda$.

The proof of this direction is similar to the previous direction and hence we only sketch it here.
From an optimal solution $S$ to $\mathcal{I}'_{u, b}(\lambda)$ one can obtain in a straightforward way a solution $S^\star$ to $\mathcal{I}$.
It is then easy to see that the regret of $S^\star$ is at most $\lambda$ and the weights of the solutions $S$ and $S^\star$ coincide.

This concludes the proof.
\end{proof}

Finally, we provide a dynamic program that solves \issh in polynomial time. 

\begin{restatable}[]{theorem}{thmISSHpolytime}
\label{thm:interval-selection-soft-hard}
There is a polynomial-time algorithm for \issh.
\end{restatable}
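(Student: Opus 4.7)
The plan is to solve \issh by dynamic programming that generalizes the classical $O(n\log n)$ DP for \intervalselectionshort. Recall the standard approach: sort jobs by right endpoint and set $f(j) = \max\{f(j-1),\ w(i_j)+f(p(j))\}$, where $p(j)$ is the largest index $k<j$ with $b_k \leq a_j$. For \issh, feasibility is constrained not only by disjointness but also by the color assignments of committed intervals, so the DP state must be enriched with the minimum information needed to check feasibility of future commitments.

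The enrichment I would use has two components. First, a time-indexed coordinate (or interval index) tracking how far along the timeline we have processed, obtained by sorting the set of distinct endpoints $a_1,b_1,\dots,a_n,b_n$ and scanning them left to right, with a consistent tie-breaking convention for the half-open intervals. Second, a compact description of the currently ``active'' color context, namely the color(s) of the most recently committed interval(s) that could still constrain subsequent choices. The crux is that interval graphs enjoy a locality property: a newly considered interval $i_j$ can only conflict with intervals whose right endpoint exceeds $a_j$, so the only color information the DP must carry forward is that of such surviving intervals. By the structural analysis in Section~\ref{sec:interval:algorithm}---in particular, the fact that an optimal solution uses at most one \universal backup and otherwise only \private backups whose scope is local---this context can be described by a constant number of parameters (e.g., the identity of the committed \universal backup, if any, together with the color and endpoints of the last committed interval), so the state space has polynomial size.

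The transitions are the natural extension of the classical recurrence: at each interval $i_j$ we either skip it, or commit it provided the current color context admits the commitment, pay its weight, and update the context. Correctness follows by induction on the prefix length, exactly as in the nominal case, with the invariant that $f(j, \sigma)$ equals the maximum weight of a feasible selection from $\{i_1,\dots,i_j\}$ that leaves the color context in state $\sigma$. Since both the number of states and the cost of each transition are polynomial, the overall running time is polynomial. Combined with Theorem~\ref{thm:IS:helper} and Proposition~\ref{prop:interval-selection-bounded-regret}, this yields Theorem~\ref{thm:interval-selection-poly}.

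The main obstacle is the design of the color-context component: without care, it could grow exponentially in the number of committed intervals or colors. This is precisely where the structural work of Section~\ref{sec:interval:algorithm} pays off, since it guarantees that a constant-size summary suffices to encode all future constraints arising from past commitments. Once this summary is identified, the remainder of the argument is a straightforward adaptation of the classical interval-scheduling DP.
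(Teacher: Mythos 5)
Your high-level plan---extend the classical left-to-right interval-scheduling DP with an enriched state that records which constraints from already-committed jobs are still active---is the same strategy the paper follows, but your proposal leaves the one genuinely nontrivial design decision unresolved, and the concrete suggestion you do offer fails. First, justifying a constant-size ``color context'' by appeal to the structural analysis of Section~\ref{sec:interval:algorithm} (one \universal backup, locality of \private backups) is a non sequitur: \issh is a self-contained problem whose feasibility condition mentions no backups, \Cref{thm:interval-selection-soft-hard} must hold for arbitrary instances, and in any case the reduction produces $\Theta(n^2)$ jobs $i_j^k$ whose blue intervals have widely varying extents, not a single backup job. Second, your candidate state ``the color and endpoints of the last committed interval'' forgets constraints that are still live: a job $i_k$ committed early can have a blue interval reaching to some $r_k$ far beyond the endpoints of every job committed after it, and that blue tail still forbids any future red interval from starting before $r_k$. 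Since several committed jobs may have pairwise-overlapping blue tails ending at different points (blue--blue overlap is permitted), ``the last committed interval'' is not even well defined, and no single interval's endpoints summarize the active constraints. Relatedly, the locality claim you import from the nominal problem (``$i_j$ can only conflict with intervals whose right endpoint exceeds $a_j$'') is ambiguous in \issh, where each job has two relevant right endpoints, $b_j$ and $r_j$, governing two different halves of the feasibility condition.

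The fix---and this is exactly what the paper does---is to observe that the surviving constraints are summarized by \emph{two} numbers rather than one: a threshold $t_b$ bounding the red right endpoints and a threshold $t_r$ bounding the blue right endpoints of the committed jobs. The paper's table entry $s^*(t_b,t_r)$ is the optimum over jobs with $b_j\leq t_b$ and $r_j\leq t_r$, and adding a job $i_j$ makes the recurrence jump back to $s^*(a_j,\ell_j)$, which enforces both that earlier red intervals clear the new blue interval and that earlier blue intervals clear the new red interval. This is still a ``constant number of parameters'' in your sense, so your intuition is salvageable, but as written your argument has a gap precisely at the point where the entire difficulty of the theorem resides.
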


\begin{proof}
We extend the classical dynamic programming algorithm for \intervalselection~\cite{zbMATH03513839} to \Issh.
We denote by $\mathcal{F}(I)$ the set of feasible solutions for the set of jobs in $I$, where $I$ is an instance of \issh. In particular, any $S \in \mathcal{F}(I)$ satisfies that for any job in~$S$, the red part of that job does not overlap with the red or blue part of any other job in~$S$.
Let $A = \{ a_1, \dots, a_n \}$ and $B = \{ b_1, \dots, b_n \}$ be the sets of start and end points of all red intervals of the jobs, respectively.
Furthermore, let $L = \{ \ell_1, \dots, \ell_n \}$ and $R = \{ r_1, \dots, r_n \}$ be the sets of start and end points of all blue intervals of the jobs, respectively.
These sets $A$, $B$, $L$, and $R$ need not be ordered.
We define $T_b = \max_{1 \leq i \leq n} b_i$ and $T_r = \max_{1 \leq i \leq n} r_i$.

For all $t_b \in B$ and for all $t_r \in R$, define $B_{t_b, t_r} = \{i_j \in I \mid b_j = t_b \text{ and } r_j \leq t_r \}$ and $R_{t_b, t_r} = \{i_j \in I \mid r_j = t_r \text{ and } b_j \leq t_b \}$.
Additionally, let $I_{t_b, t_r} = \{ i_j \in I \mid b_j \leq t_b \text{ and } r_j \leq t_r \}$.
Analogous to $\mathcal{F}(I)$ we also define $\mathcal{F}(I_{t_b, t_r})$. 

In the algorithm, for each tuple $(t_b, t_r)\in B\times R$,   we compute a maximum weight set $S \in \mathcal{F}(I_{t_b, t_r})$. 
We define
\[ 
    S^\star(t_b, t_r) \coloneqq \argmax_{S \in \mathcal{F}(I_{t_b, t_r})} w(S) \quad \text{ and } \quad s^\star(t_b, t_r) \coloneqq \max_{S \in \mathcal{F}(I_{t_b, t_r})} w(S) \ .
\]

Additionally, for $t, t' \in A \cup B \cup L \cup R$, let $s^\star(t, t') \coloneqq \max_{t_b \leq t, t_r \leq t'} s^\star(t_b, t_r)$ (and similarly $S^\star(t, t')$). 
Our goal is to find $S^\star(T_b, T_r)$ and $s^\star(T_b, T_r)$.
Suppose that we have computed the following:
\begin{itemize}
	\item $S^\star(t, t')$ and $s^\star(t, t')$ for all $t \leq \tau_b$ and $t' \leq \tau_r$, $t, t' \in A \cup B \cup L \cup R$ for some $\tau_b \in B$ and $\tau_r \in R$ such that $\tau_b \leq \tau_r$, and 
	\item  $S^\star(t, t')$ and $s^\star(t, t')$ for all $t < \tau_b$ and $t, t' \in A \cup B \cup L \cup R$. 
\end{itemize}

For the first step, let $t'_r \coloneqq \min \{t'_r \in R \mid t'_r > \tau_r \}$ be the next time point after $\tau_r$ at which a blue interval of some job ends.

We get
\[ 
s^\star(\tau_b, t'_r) = \max \Big\{ s^\star(\tau_b, \tau_r), ~ \max_{i_j \in R_{\tau_b, t'_r}} \{ s^\star(\ell_j, a_j) + w(i_j) \}  \Big\} \  
\]

\noindent
and $S^\star(\tau_b, t'_r)$ is the corresponding set where $s^\star(\tau_b, t'_r)$ is attained.
After exhaustively computing this first step for all $t_r'\in R$, we can compute $S^\star(t, t')$ and $s^\star(t, t')$ for all $t \leq \tau_b$ and $t' \in A \cup B \cup L \cup R$.

For the second step, let $t'_b \coloneqq \min \{t'_b \in B \mid t'_b > \tau_b \}$ be the next time point after $\tau_b$ at which a red interval of some job ends.
Furthermore, let $r_{t'_b} \coloneqq \min \{ r \in R \mid \exists i_j \in I \text{ such that } b_j = t'_b \text{ and } r_j = r \}$
and define 
$I^=_{t_b, t_r} = \{i_j \in I \mid b_j = t_b \text{ and } r_j = t_r \}$. Note that $I^=_{t'_b, r_{t'_b}} \neq \emptyset$.
We obtain $S^\star(t'_b, r_{t_b'})$ and $s^\star(t'_b, r_{t_b'})$ as follows:
\[ 
    s^\star(t'_b, r_{t_b'}) = \max \Big\{ s^\star(\tau_b, r_{t_b'}), ~ \max_{i_j \in  I^=_{t'_b, r_{t_b'}}} \{ s^\star(\ell_j, a_j) + w(i_j) \}  \Big\} \  
\]

\noindent
and $S^\star(t'_b, r_{t_b'})$ is the corresponding set where $s^\star(t'_b, r_{t_b'})$ is attained. 
Furthermore, after each of the above two steps, for $t, t' \in A \cup B \cup L \cup R$, we update $s^\star(t, t')$ and $S^\star(t, t')$.

It is straightforward to verify that the above dynamic programming algorithm produces an optimal solution in polynomial time, which proves the theorem.
\end{proof}

\bigskip

Combining Proposition~\ref{prop:interval-selection-bounded-regret}  with Theorem~\ref{thm:IS:helper} and Theorem~\ref{thm:interval-selection-soft-hard} directly implies Theorem~\ref{thm:interval-selection-poly}.

\subsection{Generalization to $(1,p)$-\robustintervalselection}
\label{sec:generalization-scheduling}

In this section, we consider the $(1,p)$-\robustintervalselection problem ($(1,p)$-\robustintervalselectionshort), a natural generalization of $(1,1)$-\robustintervalselection, in which the adversary still deletes one element but the algorithm may recover by adding up to $p$ elements.
We extend the ideas from the previous section to $(1,p)$-\robustintervalselectionshort for an arbitrary constant~$p \in \mathbb{Z}_{\geq 1}$.

\begin{theorem}
\label{thm:rros:1,p:RIS}
There is a polynomial-time algorithm for $(1,p)$-\robustintervalselection for any constant~$p \in \mathbb{Z}_{\geq 1}$.
\end{theorem}

To prove this theorem, we show that Proposition \ref{prop:interval-selection-bounded-regret} and Theorem \ref{thm:IS:helper} can be generalized.
The generalization of Proposition \ref{prop:interval-selection-bounded-regret} is straightforward, since the total number of distinct values that the regret $\lambda$ can attain is $\mathcal{O}(n^{p+1})$.
Consequently, we can again exploit the strong connection between the robust version of the problem and its bounded-regret variant.
In particular, this observation yields a polynomial-time algorithm for $(1,p)$-\robustintervalselectionshort for any constant~$p$, provided that the
bounded-regret version can be solved efficiently for each fixed value of~$\lambda$. We refer to the bounded-regret problem as $\lambda$-$(1,p)$-\robustintervalselectionshort and it remains to solve $\lambda$-$(1,p)$-\robustintervalselectionshort.
To this end, in the next paragraphs we show that $\lambda$-$(1,p)$-\robustintervalselectionshort can be reduced to a polynomial number of \issh instances, each of which being solvable in polynomial time due to Theorem~\ref{thm:interval-selection-soft-hard}.
This reduction, in turn, provides the desired generalization of Theorem~\ref{thm:IS:helper}. To do this, we first define a generalized notion of private, universal, and extra backup.

\paragraph{Defining backups for $(1,p)$-\robustintervalselection.} 
We generalize the notion of backups introduced in the last section to $(1,p)$-\robustintervalselection.
For a job $i_j$, let $P^{i_j}\coloneqq \{i_k \in I \mid [a_k, b_k) \cap [a_j, b_j) \neq \emptyset\}$ be the set of jobs that intersect $i_j$.
Let $\mathcal{F}(I)$ be a collection of sets of pairwise-disjoint intervals in $I$, and $S\in \mathcal{F}(I)$. 
We define, for each element $i_j \in I$, a {\em backup set} $\backup_p(i_j, S)$ of size at most $p$: 
\[ 
    \backup_p(i_j, S) \in \argmax_{\substack{R \subseteq I \setminus (S + i_j)\\ (S - i_j) \cup R \in \mathcal{F}(I)\\ |R| \leq p}} \ w(R) \ . 
\]
Informally speaking, $\backup_p (i_j, S)$ is a best possible recourse of size at most $p$ that can be added to the remaining solution $S - i_j$ when $i_j$ is interdicted.
Hence, $\backup_p (i_j, S)$ serves as a backup for $i_j$.
For technical reasons later we assume that this backup is unique, which can be obtained by slightly perturbing the weights of the jobs so that optimum solutions do not change.

Note that, for a job $i_j\in S$, its backup $\backup_p (i_j, S)$ can be partitioned into a \emph{private} part $\backup_p^{\rm pr} (i_j, S)$ and a \emph{universal} part $\backup_p^{\rm un} (i_j, S)$ such that $\backup_p (i_j, S) = \backup_p^{\rm pr} (i_j, S) \cup \backup_p^{\rm un} (i_j, S)$, $\backup_p^{\rm pr} (i_j, S) \cap \backup_p^{\rm un} (i_j, S) =~\emptyset$, and
\begin{itemize}
\item[(a)]  $\backup_p^{\rm pr} (i_j, S) \subseteq P^{i_j}$, and 
\item[(b)] $\backup_p^{\rm un} (i_j, S) \cap P^{i_j} = \emptyset$.
\end{itemize}

This is simply a generalization of the $(1, 1)$-case: All jobs of $\backup_p (i_j, S)$ that intersect with $i_j$, i.e., the jobs in $\backup_p^{\rm pr} (i_j, S)$, can only be added if $i_j$ is removed from $S$ and hence they are \emph{private} for job $i_j$.
If $|\backup_p^{\rm pr} (i_j, S)| = q$, we call this a \emph{$q$-private backup for $i_j$}.
Similarly, all jobs in $\backup_p^{\rm un} (i_j, S)$ can be added to $S$ without deleting $i_j$. Hence, $\backup_p^{\rm un} (i_j, S)$ can be added to $S$ as a recourse for any $i_j \in S$. If $|\backup_p^{\rm un} (i_j, S)| = q$, we call this a \emph{$q$-universal backup}.
However, there is a fundamental difference compared to the $(1, 1)$-case: Now, the recourse after deleting $i_j$ from $S$ may consist of both, private \emph{and} universal backups.
Hence, there might exist multiple distinct $q$-universal backups for a solution $S$, depending on the deletion of the adversary.
To handle this, we show in the following decomposition lemma that the number of different $q$-universal backups is bounded.
For this, we define $U_q \coloneqq \{ \backup_p^{\rm un} (i_j, S) \mid i_j \in S \text{ and } |\backup_p^{\rm un} (i_j, S)| = q \}$, i.e., the set of distinct universal backups of $S$ of size exactly $q$. Recall that we can assume that $\backup_p (i_j, S)$ is unique and hence $\backup_p^{\rm un} (i_j, S)$ is well-defined.

\begin{figure}
\begin{center}
\begin{tikzpicture}[scale=.9]
	\interval[rwthred]{1}{.5}{2}{}{$i_1$}
	\interval[rwthred]{1}{3}{5}{}{$i_2$}
	\interval[rwthred]{1}{8}{10}{}{$i_3$}
	\interval[rwthred]{1}{12}{13.5}{}{$i_4$}
	
	\interval[gray]{2}{-2}{.5}{}{$f_0$}
	\interval[gray]{2}{2}{3}{}{$f_1$}
	\interval[gray]{2}{5}{8}{}{$f_2$}
	\interval[gray]{2}{10}{12}{}{$f_3$}
	\interval[gray]{2}{13.5}{15.5}{}{$f_4$}
	
	\interval[rwthgreen]{3}{-1.75}{-1}{}{}
	\interval[rwthgreen]{3}{2}{2.5}{}{}
	\interval[rwthgreen]{3}{5}{6}{}{}
	\interval[rwthgreen]{3}{6.5}{8}{}{}
	\interval[rwthgreen]{3}{10}{11}{}{}
	\interval[rwthgreen]{3}{14}{15.5}{}{}
	\interval[rwthgreen]{4}{-1.5}{-.5}{}{}
	\interval[rwthgreen]{4}{2.25}{3}{}{}
	\interval[rwthgreen]{4}{5.5}{7.25}{}{}
	\interval[rwthgreen]{4}{10.5}{12}{}{}			
	
\end{tikzpicture}        
\end{center}
\caption{The set of red intervals consists of a solution, and the set of gray intervals are the corresponding free intervals which together form the set {\rm Free}. Additionally, the green intervals form the set $I_{\rm Free} \subseteq I$.}
\label{fig:(1,p)RIS}
\end{figure}

\begin{lemma}
\label{lem:1-p-interval:decomposition}
Let $I$ be an instance of $(1,p)$-\robustintervalselection and let $S$ be an optimum first-stage solution of $I$. Then the following holds.
\begin{enumerate}
\item $\backup_p (i_j, S)$ can be decomposed into a private part $\backup_p^{\rm pr} (i_j, S)$ and a universal part $\backup_p^{\rm un} (i_j, S)$ as above.
\item There are at most $2q+1$ many $q$-universal backups, i.e., $|U_q| \leq 2q+1$.
\end{enumerate}
\end{lemma}

\begin{proof}
    The first part is trivial due to our discussion above. Hence, it remains to prove the second part.
    Let $S = \{ i_1, \ldots, i_m\}$ be the first-stage solution ordered non-decreasingly by starting times of the jobs, where $i_j = [a_j, b_j)$.
    Let $a_0$ be the smallest starting time of any job in $I$ and let $b_n$ be the largest finishing time of any job in $I$.
    Define the \emph{free} intervals $f_0 = [a_0, a_1)$, $f_1 = [b_1, a_2)$, $f_2 = [b_2, a_3)$, \ldots, $f_m=[b_m, b_n)$ to be the intervals that are not occupied by $S$ and let ${\rm Free} = \{f_0, f_1, \ldots, f_m\}$.
    Let $I_{{\rm Free}} \subseteq I$ be the set of jobs that lie completely within some $f_i \in {\rm Free}$.
    See Figure~\ref{fig:(1,p)RIS} for reference.
    By definition of universal backups, any universal backup $i_k \in I \setminus S$ cannot intersect a job of $S$ and hence must lie in some $f_i \in {\rm Free}$. Therefore, $i_k \in I_{{\rm Free}}$.
    Note, however, that the private backup $\backup_p^{\rm pr} (i_j, S)$ of some job $i_j \in S$ can intersect $f_{j-1}$ and $f_j$ but no other $f_i$, $i \notin \{j-1, j\}$. 
    Otherwise, if $\backup_p^{\rm pr} (i_j, S)$ intersects some $f_i$, $i \notin \{j-1, j\}$, $\backup_p^{\rm pr} (i_j, S)$ is not a feasible backup,  since $\backup_p^{\rm pr} (i_j, S)$ is a subset of $P^{i_j}$ and hence it must intersect $i_{j-1}$ or $i_{j+1}$. 
    
    Let $R_f^q \subseteq I_{{\rm Free}}$ be the maximum weight set of at most $q$ jobs from $I_{{\rm Free}}$.
    Note that $R_f^q$ can intersects at most $q$ intervals of ${\rm Free}$ and let these intervals be ${\rm Free}' \subseteq {\rm Free}$.
    By the properties of a private backups, 
    for any $j$ such that $f_{j-1}, f_j \notin {\rm Free}'$, the job $i_j \in S$ can have $R_f^q$ as a $q$-universal backup. 
    Since $R_f^q \subseteq I_{{\rm Free}}$ is the maximum weight subset 
    of jobs of $I_{{\rm Free}}$ that selects at most $q$ jobs, we have $\backup_p^{\rm un} (i_j, S) = R_f^q$ (recall that we assume the backup to be unique).
    Furthermore, since $|R_f^q| \leq q$ and hence $|{\rm Free}'|\leq q$, there are at most $2q$ jobs $i_j \in S$ such that $\backup_p^{\rm pr} (i_j, S)$ intersects with a job in $R_f^q$.
    Each such job might have its unique $q$-universal backup $\backup_p^{\rm un} (i_j, S)$.
    Hence, in total there are at most $2q+1$ many $q$-universal backups, which proves the Lemma.
\end{proof}

Similar to the $(1, 1)$-case, we also have \emph{extra} backups:
For the case where $\lambda<0$ and a job $i_j$ of a $p$-universal backup is interdicted, we need an extra backup.
However, note that there is at most one $p$-universal backup, since the recourse is bounded by $p$ and in that case the private backup for any job is empty. 
Therefore, the $p$-universal backup of any job $i_j \in S$ is $R^p$, where $R^p$ is the maximum weight set of jobs that does not intersect $S$ and contains at most $p$ jobs. 
Since any job $i_k$ of $R^p$ can be removed, for each such case we need at most one extra backup of size $p$ that does not contain any job from $S \cup \{i_k\}$, and hence the number of extra backups is bounded by $p$.
This leads to the following lemma.

\begin{lemma}
\label{lem:1-p-interval:bound-universal-and-extra-backups}
Let $I$ be an instance of $(1,p)$-\robustintervalselection and let $S$ be an optimum first-stage solution of $I$. 
Then the total number of universal and extra backups is bounded by $p \cdot (p+3)$.
\end{lemma}

\begin{proof}
    Due to Lemma~\ref{lem:1-p-interval:decomposition}, we know that the total number of $q$-universal backups is bounded by $2q+1$. 
    Hence, the total number of universal backups is bounded by
    \[
        \sum_{q=1}^p (2q+1) = p + 2 \cdot \sum_{q=1}^p q = p + p \cdot (p+1) = p \cdot (p+2) \ .
    \]
    By our previous discussion, the number of extra backups is bounded by $p$ and hence the total number of universal and extra backups is bounded by 
    \[
        p \cdot (p+2) + p = p \cdot (p+3) \ ,
    \]
    which proves the lemma.
\end{proof}

Lemma~\ref{lem:1-p-interval:bound-universal-and-extra-backups} allows us to guess the set of universal and extra backups of an optimum solution: The size of each set is bounded by $p$ and there are at most $p \cdot (p+3)$ many distinct sets. Hence, the total number of guesses we need to perform is $\mathcal{O}((n^p)^{p \cdot (p+3)})$, which is polynomial if $p$ is constant.

After guessing the value of $\lambda$ and guessing the set of universal and extra backups of an optimum solution, we construct an instance of \issh, similar to the $(1, 1)$-case. Here, it remains to construct, for each $i_j \in I$, the appropriate private backups.
Recall that a $q$-private backup of $i_j \in I$ must intersect $P^{i_j}\coloneqq \{i_k \in I \mid [a_k, b_k) \cap [a_j, b_j) \neq \emptyset\}$. Each $P^{i_j}$, $i_j\in I$, can be calculated in $\mathcal{O}(n)$ time.
To obtain all possibilities for a $q$-private backup, for each $q \in \{1, 2, \ldots, p\}$ we enumerate over all starting points $\ell$ and ending points $r$ of that private backup such that $\ell \leq r$.
For each $q$ and each such pair of endpoints $(\ell, r)$, we compute in polynomial time the maximum weighted set of pairwise disjoint intervals in the set $\{i_k\in P^{i_j} \mid l\leq a_k < b_k\leq r\}$ that contains at most $q$ intervals. 
The total number of $(\ell,r)$ pairs possible is $\binom{2n}{2}$.
All in all, we can calculate $q$-private backups for all $q\in[p]$ of all jobs in polynomial time. 

Finally, we construct the instance of \issh by computing all $q$-private backups for each $i_j \in I$ and define the corresponding jobs with red and blue intervals. Afterwards, we prune the resulting jobs from the instance of \issh according to the guessed value of $\lambda$, where we consider for each job its $q$-private backup and the best feasible $(p-q)$-universal backup (from the set of guessed universal backups).
Then the proof of Theorem~\ref{thm:rros:1,p:RIS} is similar to the proof of Theorem~\ref{thm:interval-selection-poly}.
	\section{Robust Bipartite Matching}\label{sec:robust-bipartite-matching}
{\sc Maximum Matching} is a classical combinatorial optimization problem in which we are given an undirected graph $G$, and the task is to find a matching in $G$ of maximum cardinality.
It is well known that one can find a maximum-cardinality matching in polynomial time even in general graphs.
In stark contrast to this, we show that the robust counterpart of {\sc Maximum Matching} becomes intractable even when we restrict ourselves to bipartite graphs.
In this section, we consider the problem \robustmatching\ (\robustmatchingshort), which is the problem \eqref{eq: (1,1)recoverablerobustproblem} where $\mathcal{F}$ consists of all matchings in an underlying bipartite graph.
We study the decision version of this problem.
Note that the problem of finding a maximum-cardinality matching in a bipartite graph is a classical example for {\sc Matroid Intersection}, i.e., the problem to find a maximum-cardinality set which is independent in two matroids on the same ground set $E$.
Hence, the following result illustrates a drastic increase in the complexity when considering the {\em intersection of two matroids} instead of a {\em single matroid}.
Below we give a definition of the {\sc Unweighted} \robustmatching problem.

\begin{problemenv}
	\problemtitle{{\sc Unweighted} \robustmatching (\robustmatchingshort)}
	\probleminput{An undirected bipartite graph $G = (V, E)$ and a positive integer $k$.}
	\problemquestion{Is there a matching $M$ in $G$ such that for each $\f \in E$ there exists an edge $\e\in E-\f$ for which $M-\f+\e$ is a matching of cardinality at least $k$?}
\end{problemenv}

Before we proceed, we make a general statement on Robust maximum-cardinality problems.
Consider the unweighted version of problem~\eqref{eq: (1,1)recoverablerobustproblem}, where the task is to find a set $S \in \mathcal{F}$ which maximizes $|S - \f + \e|$ after the interdiction of $\f \in E$ and the possible recourse of element $\e\in E - \f$.
Note that $|S-\f+\e| \geq |S|-1$. 
Thus, it suffices to restrict our search of $S$ to the maximum cardinality sets in $\mathcal{F}$. 
Moreover, it suffices to consider $\f\in S$.
We call a solution $S \in \mathcal{F}$ of the nominal problem \emph{repairable}, if for each $\f \in S$ there exists an element $\e \in E \setminus S$ such that $S-\f+\e \in \mathcal{F}$.

\begin{observation}\label{obs:unit-weights}
	For the robust counterpart \eqref{eq: (1,1)recoverablerobustproblem} of a combinatorial problem \problempi with unit weights, there is a maximum-cardinality solution $S$ to \problempi that is optimal for~\eqref{eq: (1,1)recoverablerobustproblem}.
	Furthermore, if there is a maximum-cardinality solution $S$ of~ \problempi that is repairable, then $S$ is optimal.
\end{observation}

\noindent We are now ready to state and prove the main result of this section.

\begin{restatable}[]{theorem}{matching}
	\label{thm:hardness:matching}
	{\sc Unweighted} \robustmatching problem is \NP-complete.
\end{restatable}
\begin{proof}
	Clearly, the {\sc Unweighted Robust  Bipartite Matching}  problem is in \NP, because for any given matching $M$, one can enumerate for each possible deletion of edge $\f \in E$ the recourse $\e \in E-\f$, and verify whether \mbox{$|M-f+e| \geq k$.}
	
	If we set $k$ to be the size of a maximum-cardinality matching, then by Observation~\ref{obs:unit-weights}, we need to essentially find a repairable matching in the graph.
	Thus, it suffices to show that deciding whether there exists a \repairable matching of maximum cardinality is \NP-hard.
	We refer to this problem as \repairablematching (\repairablematchingshort) and give a reduction from \tsat: given a set of Boolean variables $X=\{x_1, x_2, \dots, x_n\}$ and a formula $\phi$ in conjunctive normal form, where each clause has $3$ variables, does there exist a truth assignment that satisfies~$\phi$?
	
	Let $I$ be an instance of \tsat\ with a set $X$ of Boolean variables and a set $Y=\{C_1, C_2, \dots C_m\}$ of clauses in which $|C_i|=3$. 
	We construct an instance \irm of the problem \repairablematchingshort with input graph $G=(V,E)$ as follows (see also Figure~\ref{fig:a:reduction}): 
	\begin{itemize}
		\item For each variable $x_i\in X$ in $I$, we add two nodes $a_i$ and $\bar{a}_i$ in $V(G)$ corresponding to a positive and negative literal of $x_i$, respectively. 
		We set $A \coloneqq \{a_1,\dots, a_n, \bar{a}_1, \dots, \bar{a}_n\}$.
		Moreover, we add $b_i$ in $V(G)$ for each $x_i\in X$ and set $B \coloneqq \{b_1, \dots, b_n \}$. 
		Additionally, for each clause $C_j\in Y$, we add two nodes $c_j, z_j$ in $V(G)$ and set $C \coloneqq \{c_1, \dots, c_m \}$ and $Z \coloneqq \{z_1, \dots, z_m \}$.
		Finally, we have $V(G) = A \cup B \cup C \cup Z$.
		\item Let the edge set $E(G)$ be the union of the following set of edges. 
		\begin{itemize}
			\item[(a)] For all $i\in [n]$, add $a_ib_i, \bar{a}_ib_i$ to $E(G)$.
			\item[(b)] For all $j\in [m]$, add $c_jz_j$ to $E(G)$.
			\item[(c)] If $x_i\in C_j$, then add $a_ic_j$ to $E(G)$ and if $\bar{x}_i\in C_j$, then add $\bar{a}_ic_j$ to $E(G)$.
		\end{itemize}
	\end{itemize}
	Note that $G$ is bipartite as $V(G)$ can be partitioned into two stable sets, $A \cup Z$ and $B \cup C$.
	\begin{figure}
		\begin{center}
			\begin{tikzpicture}[scale=0.58, >=latex, rotate=90]
				\draw[black!40, thick, dashed, rounded corners] (-1,2) -- (11, 2) --(11,-18)-- (-1, -18) -- cycle;
				\node (phi) at (10,-8) {\small
					$\phi = (x_1 \vee x_2 \vee \bar{x}_3) \wedge (\bar{x}_1 \vee  \bar{x}_2 \vee x_n) \wedge \dots \wedge (x_1 \vee x_3 \vee \bar{x}_n)$};
				
				\node (to) at (8.9,-8) {$\downarrow$};
				
				\foreach \x in {-2,-5,-14} {\draw[black] (6,\x) -- (7.5,\x);}
				\foreach \x in {-2,-5,-8,-14} {\draw[black] (0,\x) -- (1.5,\x-1+0.3);}
				\foreach \x in {-2,-5,-8,-14} {\draw[black] (0,\x) -- (1.5,\x+1-0.3);}
				
				\foreach \x in {-2,-5,-14} {\draw[black] (6,\x) -- (7.5,\x);}
				
				\foreach \x in {-5,-8} {\draw[black] (0,\x) -- (1.5,\x-1+0.3);}
				\foreach \x in {-2, -14} {\draw[black] (0,\x) -- (1.5,\x-1+0.3);}
				\foreach \x in {-2,-5,-8,-14} {\draw[black] (0,\x) -- (1.5,\x+1-0.3);}
				\foreach \x in {-5,-8} {\draw[black] (0,\x) -- (1.5,\x+1-0.3);}
				
				\draw[black] (1.5,-1-0.3) -- (6,-2);
				\draw[black] (1.5,-4-0.3) -- (6,-2);
				\draw[black] (1.5,-9+0.3) -- (6,-2);
				\draw[black] (1.5,-3+0.3) -- (6,-5);
				\draw[black] (1.5,-6+0.3) -- (6,-5);
				\draw[black] (1.5,-13-0.3) -- (6,-5);
				\draw[black] (1.5,-1-0.3) -- (6,-14);
				\draw[black] (1.5,-7-0.3) -- (6,-14);
				\draw[black] (1.5,-15+0.3) -- (6,-14);
				
				\foreach \x in {-2,-5,-8,-14} {\node[bluevertex] (\x) at (0,\x*1) {};}
				\foreach \x in {-2,-5,-14} {\node[bluevertex] (\x) at (6,\x*1) {};}
				\foreach \x in {-2,-5,-14}  {\node[redvertex] (\x) at (7.5,\x*1) {};}
				\foreach \x in {-1,-4,-7,-13} {\node[redvertex] (\x) at (1.5,\x*1-0.4) {};}
				\foreach \x in {-3,-6,-9,-15} {\node[redvertex] (\x) at (1.5,\x*1+0.4) {};}
				
				\node[color=black!40] (ldots) at (1.5,-11) {$\hdots$};
				\node[color=black!40] (ldots) at (7,-9.5) {$\hdots$};
				
				\node (y1) at (-0.1,-2.5) {\small$b_1$};
				\node (y2) at (-0.1,-5.5) {\small$b_2$};
				\node (y3) at (-0.1,-8.5) {\small$b_3$};
				\node (yn) at (-0.1,-14.5) {\small$b_n$};
				
				\node (c1) at (6.35,-2.35) {\small$c_1$};
				\node (c2) at (6.35,-5.35) {\small$c_2$};
				\node (cm) at (6.35,-14.45) {\small$c_m$};
				
				\node (z1) at (7.85,-2.3) {\small$z_1$};
				\node (z2) at (7.85,-5.3) {\small$z_2$};
				\node (zm) at (7.85,-14.4) {\small$z_m$};
				
				\node (x1) at (1.33,-0.95) {\small$a_1$};
				\node (x2) at (1.33,-3.95) {\small$a_2$};
				\node (x3) at (1.33,-6.95) {\small$a_3$};
				\node (xn) at (1.33,-12.95) {\small$a_n$};
				
				\node (x1') at (1.35,-3.15) {\small$\bar{a}_1$};
				\node (x2') at (1.35,-6.15) {\small$\bar{a}_2$};
				\node (x3') at (1.35,-9.15) {\small$\bar{a}_3$};
				\node (xn') at (1.35,-15.15) {\small$\bar{a}_n$};
			\end{tikzpicture}
		\end{center}
		\caption{\NP-hardness reduction from \tsat to {\sc Unweighted} \robustmatching.}
		\label{fig:a:reduction}
	\end{figure}
	Given a satisfying truth assignment $\gamma: X\rightarrow\{0,1\}$ of $I$, we show that the matching  $M=\{c_j z_j \mid j \in [m]\}\cup \{a_i b_i\mid \gamma(x_i)=0\}\cup \{\bar{a}_ib_i\mid \gamma(x_i)=1\}$ is a solution for \irm.
	Notice that $|M|=m+n$ as $M$ leaves exactly $n$ unsaturated vertices in $V(G)$. 
	The set of these $n$ unsaturated vertices constitutes of one element from each $\{a_i,\bar{a}_i\}$ where $i\in [n]$.
	These are precisely the literals with truth value 1 assigned to them by $\gamma$. 
	All edges $\e\in M$ contain either some $b_i$, $i \in [n]$, or some $c_j$, $j \in [m]$, as an endpoint. 
	Note that $M$ is a matching in $G$ of maximum cardinality.
	
	To prove that $M$ is a solution of \irm,  we show that for each edge $\f\in M$ there is an edge~$\e\in E(G)\setminus M$ such that $M-\f+\e$ is a matching. 
	In other words, for each edge $\f$, there is an edge $\e$ connecting an unsaturated vertex in $V(G)$ to an endpoint of $\f$. 
	If $\f$ is incident to $b_i$, we take as $\e$ the other edge incident to $b_i$. 
	If $\f$ is incident to $c_j$, then there is an unsaturated vertex $a_i$ such that $a_ic_j\in E(G)\setminus M$, or an unsaturated vertex $\bar{a}_i$ such that $\bar{a}_ic_j \in E(G)\setminus M$.
	This is because~$M$ is derived from the satisfying truth assignment $\gamma$ of $I$, and clause $C_j$ contains some literal $x_i$ or $\bar{x}_i$ with truth assignment~$1$. 
	
	Next, we show that if \irm is a yes-instance of \repairablematchingshort, then $I$ also is a yes-instance of \tsat. 
	If \irm is a yes-instance of \repairablematchingshort, then there is a matching $M'$ of size $m+n$ in which each $\f \in M'$ has an edge $\e\in E(G)\setminus M'$ such that $M'-\f+\e$ is a matching. 
	
	First, we claim that we can assume without loss of generality that $M'$ contains all edges in $\{ c_j z_j \mid j \in [m] \}$.
	Assume this is not true. 
	Then, since $M'$ has to have size $m+n$, there must exist some edge $a_i c_j \in M'$ or some edge $\bar{a}_i c_j \in M'$. 
	Without loss of generality, assume we are in the former case.
	Then we claim that $M'' = M' + c_j z_j - a_i c_j$ is also a \repairable matching of cardinality $m+n$.
	Clearly, $M''$ is a matching of cardinality $m+n$.
	Hence, it remains to show that for each $\f \in M''$ there exists some $\e \in E \setminus M''$ such that $M'' - \f + \e$ is also a matching.
	First, consider some $\f \in M'' \setminus \{ c_j z_j \}$. 
	In that case, set $\e$ to be some edge such that $M' - \f + \e$ is a matching (which exists since $M'$ is \repairable).
	Observe that $\e$ cannot be incident to $c_j$ (since $a_i c_j \in M'$) and also not incident to $z_j$ (since $z_j$ has degree 1).
	Hence, $M'' - \f + \e$ is a matching.
	Second, consider the case $\f =  c_j z_j$. 
	Now choose $\e = a_i c_j$. 
	In this case, clearly, $M'' - \f + \e$ is a matching since $M'' - \f + \e  = M'$.
	By applying this procedure iteratively to all edges $a_i c_j \in M'$, we obtain the desired result.
	
	Observe that all edges in $M'$ are either of the form $c_j z_j$ for some $j \in [m]$, or of the form $a_i b_i$ or $\bar{a}_i b_i$ for some $i \in [n]$.
	Hence, since $M'$ has cardinality exactly $m+n$, either $a_i$ or $\bar{a}_i$, $i \in [n]$ is saturated by~$M'$.
	Let $U\subseteq A$ be the set of vertices in $A$ that are not saturated by~$M'$. 
	Let $\gamma: X\rightarrow \{0,1\}$ where $\gamma(x_i)=1$ if $a_i\in U$, and $\gamma(x_i)=0$ if $\bar{a}_i\in U$. 
	We claim that this assignment satisfies the formula~$\phi$. 
	
	For the sake of contradiction, assume that $\gamma$ does not satisfy $\phi$. 
	Then there is a clause $C_j$ for which all three literals in $C_j$ evaluate to 0. 
	Now consider the corresponding vertices in the graph~$G$; the literals evaluate to 0 because the corresponding vertices are matched to $b_i$ in matching $M'$. 
	This implies there is no $\e\in E(G)\setminus M'$ which replaces the edge $c_jz_j \in M'$. 
	However, this contradicts the fact that $M'$ is a solution to \irm. 
\end{proof}

	\section{Robust Stable Set in Bipartite Graphs}\label{sec:robust-independent-set}
This section is devoted to the \robuststableset (\robuststablesetshort) problem, which is the problem \eqref{eq: (1,1)recoverablerobustproblem} where the feasibility set $\mathcal{F}$ is the set of all stable sets, commonly also known as independent sets, of the input graph.
In the following we consider bipartite graphs and show
that the {\sc Unweighted} \robuststableset problem admits an efficient algorithm in bipartite graphs.
This result extends to 
\emph{König-Egerváry} graphs which generalize bipartite graphs; 
they are graphs in which the size of a minimum vertex cover equals the size of a maximum matching \cite{deming1979independence}.
Finally, we show that the \emph{weighted} problem is \NP-hard even in bipartite graphs.

\begin{theorem}\label{maxcardindbip}
	{\sc Unweighted} \robuststableset is polynomial-time solvable in bipartite graphs. 
\end{theorem}

By Observation~\ref{obs:unit-weights}, it suffices to show that we can find in polynomial time a \repairable maximum-cardinality stable set in bipartite graphs if it exists, or declare that none exist. 
Hence, Theorem~\ref{maxcardindbip} follows directly from the following lemma.

\begin{lemma}
	\label{lem:repairable-stable-set}
	A bipartite graph $G$ contains a \repairable maximum-cardinality stable set if and only if the underlying graph admits a perfect matching such that each edge in the matching has a degree-one endpoint.
\end{lemma}

Note that in order to check whether a graph admits a perfect matching such that each edge in the matching has a degree-one endpoint, it suffices to compute a maximum-cardinality matching $M$, and check whether $M$ is perfect, and whether each edge of $M$ has at least one endpoint of degree $1$.
\begin{proof}
	Let $G$ admit a maximum stable set  $S\subseteq V(G)$ which is \repairable.
	Then by definition, for all $v\in S$, there is a vertex $w\in V\setminus S$ such that $S-v+w$ is also a stable set. 
	Note that, if  $w\notin N(v)$,  where $N(v) \coloneqq \{ w \in V \mid \{v, w \} \in E \}$ is the set of neighbors of $v$, then $S+w$ is a stable set in contradiction to the fact that $S$ is a stable set of maximum cardinality in $G$.
	Hence for all $v\in S $, there is a vertex $w\in N(v)$ such that $S-v+w$ is a stable set. 
	We call $w$ the \textit{backup} vertex of $v$.
	Note that $N(w) \cap S = \{ v \}$.
	This implies that in a \repairable stable set $S$, each vertex $v\in S$ has a distinct backup vertex $w$.
	Hence, there should be at least $|S|$ many backup vertices in the graph (one for each $v \in S$), and since the backup vertices do not belong to the stable set $S$, we have $|S|\leq \frac{n}{2}$. 
	As the size of a maximum stable set in a bipartite graph is at least $\frac{n}{2}$, we have $|S|=\frac{n}{2}$.  
	Therefore, $G$ contains a perfect matching $M = \{vw\in E\mid v\in S, w \text{ is a backup of } v\}$. 
	Now consider vertex $v\in S$ and its backup vertex $w\in N(v)$; $w$ is the only neighbor of $v$, otherwise $S$ is not a stable set or some vertex $u\in V\setminus S$ is not a backup vertex. 
	Thus each $v \in S$ is degree-one.
	
	Now let us assume that $G$ has the desired properties. We show that there is a \repairable\  maximum-cardinality stable set in $G$.
	If the graph contains a perfect matching and every matching edge is incident to a degree one vertex,	then a collection of these degree-one vertices, $S \subseteq V(G)$, one from each edge in the perfect matching, is a solution of the instance of the problem. 
	Observe that for each vertex $v\in S$, the unique neighbor $w$ of $v$ serves as a backup vertex.
\end{proof}

Notice that the crucial properties used to prove Lemma~\ref{lem:repairable-stable-set} and Theorem~\ref{maxcardindbip} even hold for König-Egerváry graphs and thus we get the following. 
\begin{corollary}
	\label{cor:konig-egervary}
	{\sc Unweighted} \robuststableset  is polynomial-time solvable in König-Egerváry graphs. 
\end{corollary}

This is interesting in the following sense:
Theorem~\ref{thm:hardness:matching} implies a hardness result for 
{\sc Unweighted} \robustmatching.
Recall that the line graph of a graph $G$ contains the vertex set corresponding to $E(G)$, and there is an edge between two vertices if and only if the corresponding edges share an endpoint in~$G$.
In general, a maximum stable set in the line graph of a graph $G$ corresponds to a maximum matching in $G$, and hence can be computed in polynomial time.
However, while Theorem~\ref{maxcardindbip} gives a polynomial-time algorithm for {\sc Unweighted} \robuststableset in bipartite graphs, this one-to-one correspondence and Theorem~\ref{thm:hardness:matching} imply that 
{\sc Unweighted} \robuststableset 
is \NP-hard in line graphs of bipartite graph.

\begin{corollary}
	{\sc Unweighted} \robuststableset is \NP-hard in line graphs of bipartite graphs.
\end{corollary}

Next, we show that the complexity of \robuststableset in bipartite graphs changes with weights.
We now consider the decision version of the \robuststableset problem.

\begin{problemenv}
	\problemtitle{\robuststableset (\robuststablesetshort)}
	\probleminput{An undirected graph $G$, a weight function $w \colon V \to \mathbb{R}_+$, and a positive integer $k$.}
	\problemquestion{Is there a stable set $S$ in $G$ such that for each $v \in V$ there exists a node $v' \in V-v$ which gives another stable set  $S-v+v'$ such that $w(S-v+v') \geq k$?}
\end{problemenv}

\begin{restatable}[]{theorem}{weightedstableset}
	\label{thm:hardness:weighted-stable-set}
	\robuststableset is \NP-complete in bipartite graphs.
\end{restatable}
\begin{proof}
	We prove that the above mentioned decision variant of the problem \weightedind\ is \NP-hard in bipartite graphs.
	We give a reduction from \tsat. Given an instance $I$ of \tsat\ with the set of variables $X=\{x_1, x_2, \dots, x_n\}$ and a satisfiability formula $\phi$ with clauses $Y=\{C_1, C_2, \dots C_m\}$, we construct an instance $I' = (G, w, k)$ of the \weightedind\ problem as follows:
	\begin{itemize}
		\item For each variable $x_i\in X$, we take vertices $a_i, \bar{a}_i, b_i, b_i^1, b_i^2$ in $V(G)$. 
		For each clause $C_j$ in $\phi$, we add a vertex $c_j$ in $V(G)$. 
		Moreover, we add 3 representative vertices, $c_j^1,c_j^2,c_j^3$, corresponding to each literal in $C_j$.
		\item The edge set $E(G)$ consists of the following set of edges:
		\begin{itemize}
			\item[(a)] For all $i\in[n]$, $b_ib_i^1, b_ib_i^2, b_i^1a_i, b_i^2\bar{a}_i \in E(G)$.
			\item[(b)] For all $j\in[m]$, $c_jc_j^1, c_jc_j^2, c_jc_j^3 \in E(G)$. 
			\item[(c)] Connect the three vertices $c_j^1, c_j^2, c_j^3$ to the corresponding literals in the clause, e.g.\ if $C_1=(v_1\vee\bar{v}_2\vee v_4)$, then add edges $c_1^1a_1, c_1^2\bar{a}_2, c_1^3a_4\in E(G)$. 
		\end{itemize}
		\item For the weight function $w: V(G)\rightarrow \mathbb{R}_{\geq0}$, consider $1 \ll r \ll s$, $r,s\in \mathbb{R}_{\geq0}$ and set 
		\begin{align*}
			w(v) & = 
			\begin{cases}
				s &\quad\text{if } v \in \{c_j\mid j\in[m]\} \cup \{b_i\mid i\in[n]\},\\
				r &\quad\text{if } v \in \{c_j^1, c_j^2, c_j^3\mid j\in[m]\}\cup \{b_i^1, b_i^2\mid i\in[n]\},\\
				1 &\quad\text{if } v \in \{a_i, \bar{a}_i\mid i\in[n]\} \ .
			\end{cases} 
		\end{align*}
		\item Set $k=(m+n-1)s+r+n$.
	\end{itemize}
	Note that the constructed graph is bipartite. 
	
	First, let us assume that $\phi$ is a yes-instance and let $\gamma: X \rightarrow \{0,1\}$ be the truth assignment. 
	We claim that $S=\{c_j\mid j\in[m]\} \cup \{b_i\mid i\in[n]\}\cup \{a_i\mid \gamma(x_i)=0\}\cup \{\bar{a}_i\mid \gamma(x_i)=1\}$ is a solution to~$I'$ of weight at least $k$.
	First, observe that $S$ is a stable set and that $w(S) = (m+n)s+n = k + s - r$. 
	Moreover, note that $a_i\notin S$ if and only if $\gamma(x_i)=1$. 
	For all $j\in[m]$, there exists a vertex $c_j'\in\{c_j^1,c_j^2, c_j^3\}$ such that $c_j'$ corresponds to a literal $a$ for which $\gamma(a)=1$, 
	and hence $a\notin S$.
	Thus, $c_j'$ is only connected to $c_j$ in $S$. 
	Therefore, this vertex $c_j'$ can be a backup vertex of $c_j$, meaning that $S + c_j' - c_j$ is a stable set of value $k$. 
	Also, with a similar argument, there is a backup vertex~$b_i^1$ or $b_i^2$ for every $b_i\in S$. 
	Thus, for each $v\in S$ of weight $w(v)=s$, there is a backup vertex $v'$ of weight $w(v')=r$. 
	Consequently, for each $v\in S$ there exists some $v' \in V \setminus S$ such that
	$w(S) + w(v') - w(v)\geq (m+n-1)s+r+n = k$,
	as claimed.
	
	Next, let us assume that there is a stable set $S$ of $G$ such that for each $v \in S$ there is some $v' \in V \setminus S$ with 
	$w(S) + w(v') - w(v) \geq (m+n-1)s+r+n = k$.
	We construct a truth assignment~$\gamma$ for $I$.
	Observe that $S$ must contain all $m+n$ vertices of weight $s$, and none of the weight $r$ vertices since they are all neighbors of weight $s$ vertices.
	Furthermore, by the property of $S$, for each $c_j$ there is a vertex $c_j'\in \{c_j^1, c_j^2, c_j^3\}$, such that the corresponding literal of $c_j'$, say $a$, is not in $S$, i.e., $S + c_j' - c_j$ is a stable set (i). 
	Since for all $v \in S$ there exists some $v' \in V \setminus S$ such that $w(S) - w(v) + w(v')\geq (m+n-1)s+r+n$, the set $S$ must contain $n$ vertices from set $A=\{a_1, \dots, a_n, \bar{a}_1,\dots, \bar{a}_n\}$. 
	Furthermore, since for each $b_i$ there is a vertex $v'$ of value $r$ such that $S + v' - b_i$ is a stable set, not both $a_i$ and $\bar{a}_i$ are in $S$. 
	Hence, exactly one of $a_i,\bar{a}_i$ belongs to $S$ (ii). 
	We take the set of vertices in $A$ which are not part of $S$ and define the assignment $\gamma(x_i)=1$ if and only if $a_i\notin S$. 
	We claim that $\gamma$ is a truth assignment.
	Clearly, by (ii), $\gamma$ is an assignment;
	and by (i), $\gamma$ satisfies exactly the definitions of a truth assignment. 
\end{proof}

	\section{Conclusion and Future Research} 
		In this paper we have introduced the model {\em \generalproblem} and investigated it for various combinatorial optimization problems, settling their computational complexity.
		While our focus was on polynomial-time solvable problems with a downward-closed feasibility set, it would be interesting to extend the study to a broader class of problems including \NP-hard problems. 
		Problems whose feasibility set is not downward-closed pose the additional challenge that the deletion of some elements might destroy feasibility which needs to be repaired by the recourse action. 
		
		Furthermore, we often have restricted our considerations to problem settings, where the adversary is only allowed to delete a single element (i.e., $k=\ell=1$), which seems most fundamental. 
		It would be interesting to see how the computation complexity unravels for the general setting with arbitrary valued $k$ and $\ell$.
		For $(k,\ell)$-\robustmatroidbasis with $k\le \ell$, we showed that the nominal optimal solution is also optimal for the robust counterpart, and gave an example illustrating that this property does not necessarily hold for $k>\ell$, or non-matroid structures.
		It still remains an open question whether or not $(k,\ell)$-\robustmatroidbasisshort is polynomial-time solvable  if $k> \ell$. 
		Similarly, even though we have efficient algorithms for $(1,p)$-\robustintervalselection for a constant value $p$, the complexity status of $(k,\ell)$-\robustintervalselection and also the $(k,\ell)$-\robuststableset remain open.
		
		Finally, the bounded-regret problem was crucial for solving \robustintervalselection. It remains open whether the bounded-regret version of other problems can be solved in polynomial time. Particularly interesting is \robustmatroidbasis which we can solve efficiently for $k\leq \ell$, but the complexity of the bounded-regret version is open.
		
    \subsection*{Acknowledgements}
    	We thank Ruiyang Zhang for discussions on Robust Interval Scheduling during his master’s thesis and for observing Example \ref{example:rroc:matroid:counterexample}.
    	
    \bibliographystyle{plain}
    \bibliography{references}		
\end{document}